\newtheorem{theorem}{Theorem}[section]
\newtheorem{lemma}[theorem]{Lemma}
\newtheorem{prop}[theorem]{Proposition}
\newtheorem{definition}[theorem]{Definition}
\newcommand\reallywidehat[1]{%
\savestack{\tmpbox}{\stretchto{%
  \scaleto{%
    \scalerel*[\widthof{\ensuremath{#1}}]{\kern-.6pt\bigwedge\kern-.6pt}%
    {\rule[-\textheight/2]{1ex}{\textheight}}
  }{\textheight}%
}{0.5ex}}%
\stackon[1pt]{#1}{\tmpbox}%
}
\title{Boosted Conformal Prediction Intervals}
\author{%
  Ran Xie \\
  Department of Statistics\\
  Stanford University\\
  \texttt{ranxie@stanford.edu} \\
   \And
   Rina Foygel Barber\\
    Department of Statistics\\
    University of Chicago \\
   \texttt{rina@uchicago.edu} \\
    \AND
    Emmanuel J. Candès\\
    Department of Statistics\\
Department of Mathematics\\
Stanford University\\
   \texttt{candes@stanford.edu} \\
}
\begin{document}
\maketitle
\begin{abstract}
This paper introduces a \textit{boosted conformal procedure} designed to tailor conformalized prediction intervals toward specific desired properties, such as enhanced conditional coverage or reduced interval length. We employ machine learning techniques, notably gradient boosting, to systematically improve upon a predefined conformity score function. This process is guided by carefully constructed loss functions that measure the deviation of prediction intervals from the targeted properties. The procedure operates post-training, relying solely on model predictions and without modifying the trained model (e.g., the deep network).  
{Systematic experiments demonstrate that starting from conventional conformal methods, our boosted procedure achieves substantial improvements in reducing interval length and decreasing deviation from target conditional coverage.}

\end{abstract}
\section{Introduction}
Black-box machine learning algorithms have been increasingly employed to inform decision-making in sensitive applications. For instance, deep convolutional neural networks have been applied to diagnose skin cancer \cite{Esteva2017}, and AlphaFold has been utilized in the development of malaria vaccines \cite{alphafold, Higgins2022}; here, scientists have employed AlphaFold to predict the structure of a key protein in the malaria parasite, facilitating the identification of potential binding sites for antibodies that could prevent the transmission of the parasite \cite{Higgins2022}. These instances highlight the critical need for understanding prediction accuracy, and one popular approach to quantify the uncertainty associated with general predictions relies on the construction of prediction sets guaranteed to contain the target label or response with high probability. Ideally, we would like the coverage to be valid conditional on the values taken by the features of the predictive model (e.g., patient demographics). 

Conformal prediction \cite{Vovk2005} stands out as a flexible calibration procedure that provides a wrapper around any black-box prediction model to produce valid prediction intervals. Imagine we have a data set $\{(X_i,Y_i)\}_{i=1}^n$ and a test point $(X_{n+1},Y_{n+1})$ drawn exchangeably from an unknown, arbitrary distribution $P$ (e.g. the pairs $(X_i,Y_i)$ may be i.i.d.). Taking the data set and the observed features $X_{n+1}$ as inputs, conformal prediction forms a prediction interval ${C_n}(X_{n+1})$ for $Y_{n+1}$ with valid marginal coverage, i.e.~such that $\mathbbm{P}(Y_{n+1}\in {C_n}(X_{n+1})) = 0.95$ or any nominal level specified by the user ahead of time. {This is achieved by means of a conformity score $E(x,y;f)$, where $(x,y)$ represents a data point while $f$ represents any aspects of the distribution that we have estimated. For instance, the score may be given by the magnitude of the prediction error $|y-\hat\mu(x)|$, where $\hat\mu(x)$ represents the model prediction of the expected outcome, in which case $f$ is simply $\hat\mu$. Roughly,} we would include $y$ in the prediction interval if $E(X_{n+1},y  {;f})$ does not take on an atypical value when compared with $\{E(X_i,Y_i  {;f})\}$, $i =1, \ldots, n$.  Selecting an appropriate conformity score is akin to choosing a test statistic in statistical testing, where two statistics may yield the same Type I error rate yet differ substantially in other aspects of performance. 

One central issue is that while the conformal procedure guarantees marginal coverage, it does not extend similar guarantees to other desirable inferential properties without additional assumptions. In response, researchers have introduced a variety of conformity scores, including the locally adaptive (Local) conformity score \cite{lei2018}, the conformalized quantile regression (CQR) conformity score \cite{romano2019}, and its variants, CQR-m \cite{kivaranovic2020adaptive} and CQR-r \cite{Sesia_2020}. Among these, CQR has often demonstrated superior empirical performance in terms of both interval length and conditional coverage \cite{romano2019}.

This paper introduces a boosting procedure aimed at enhancing an arbitrary score function.\footnote{An implementation of the boosted conformal procedure (BoostedCP) is available online at \url{https://github.com/ran-xie/boosted-conformal}.} By employing machine learning techniques, namely, gradient boosting, our objective is to modify the Local or CQR score functions (or other baselines) to reduce the average length of prediction intervals or improve conditional coverage while maintaining marginal coverage. While this paper focuses primarily on length and conditional coverage, our methods can be tuned to optimize other criteria; we elaborate on this in Section \ref{sec:discussion}.

Our boosted conformal procedure searches within a family of generalized scores for a score achieving a low value of a loss function adapted to the task at hand. Specifically, to evaluate the conditional coverage of prediction intervals, we build a loss function that maximizes deviation from the target coverage rate in the leaves of a shallow contrast tree \cite{Friedman2020}. Searching within a strategically designed family of score functions, rather than directly retraining or fine-tuning the fitted model under the task-specific loss function, yields greater flexibility and avoids the costs associated with retraining or fine-tuning. Further, this boosting process is executed post-model training, requiring only the model predictions and no direct access to the training algorithm. 

{Source code for implementing the boosted conformal procedure is available online at \url{https://github.com/ran-xie/boosted-conformal}. Details regarding the acquisition and preprocessing of the real datasets are also provided in the GitHub repository.}

\begin{figure}[t]
    \centering
    \includegraphics[width=\linewidth,trim={0cm 0.5cm 0cm 0cm},clip]{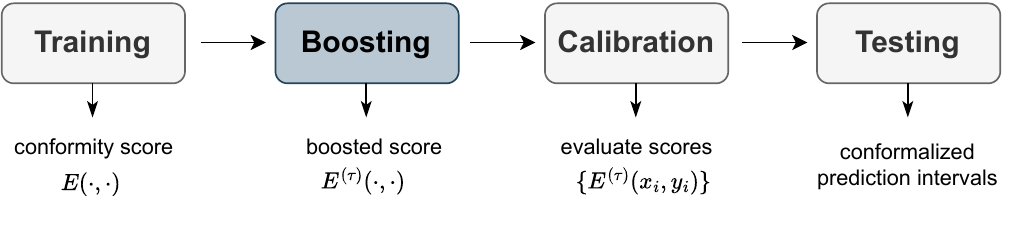}
    \caption{Illustration of the boosted conformal prediction procedure. We introduce a boosting stage between training and calibration, where we boost $\tau$ rounds on the conformity score function $E(\cdot,\cdot)$ and obtain the boosted score $E^{(\tau)}(\cdot,\cdot)$. The number of boosting rounds $\tau$ is selected via cross validation.
    A detailed description of the procedure is presented in Algorithm \ref{alg:cap}.
    }
    \label{fig:diagram}
\end{figure}
\section{
The split conformal procedure}\label{subsec:prelim}
We begin by outlining the key steps of the split conformal procedure applied to a family  $\{(X_i,Y_i)\}_{i=1}^n$ of exchangeable samples (e.g., i.i.d.).
\begin{itemize}[label={\scriptsize$\bullet$}]
    \item {\em Training.} Randomly partition $[n]$ into a training set $I_1$ and a calibration set $I_2$. On the training set, train a model by means of an algorithm $\mathbbm{A}$ to produce a conformity score function $E(\cdot,\cdot;f)$. The structure of this score function is predetermined, whereas the model $f$ is learned from $\mathbbm{A}$. An example of a conformity score is $E(x,y;f)=|y-\hat{\mu}(x)|$, where $\hat{\mu}(x)$ is a learned regression function so that $f$ is here simply $\hat{\mu}$.
    \item {\em Calibration.} Evaluate the function $E(\cdot,\cdot;f)$ on each instance in the calibration set and obtain scores $\{E_i\}_{i\in I_2}$,\footnote{The term `score' will henceforth refer to the conformity score unless stated otherwise.} with each $E_i=E(X_i,Y_i;f)$. The $(1-\alpha)$th empirical quantile of the score, $Q_{1-\alpha}(E,I_2)$, is calculated as
    \begin{align*}
        Q_{1-\alpha}(E,I_2)=\inf\{z: \mathbbm{P}\left(Z\leq z\right)\geq 1-\alpha\},
    \end{align*}
    where $Z$ follows the distribution $\frac{1}{|I_2|+1}(\delta_{\infty}+\sum \delta_{E_i})$, and $\delta_a$ is a point mass at $a$.
    \item {\em Testing.} For a new observation $X_{n+1}$, output the conformalized prediction interval 
    \begin{align}\label{eq: general.conf.int}
        {C_n}(X_{n+1})=\{y\in\mathbbm{R}: E(X_{n+1},y;f)\leq Q_{1-\alpha}(E,I_2)\}.
    \end{align}
\end{itemize}
If {ties between $\{E_i\}_{i\in I_2}$ occur with probability zero}, it holds that 
\begin{align}\label{eq:conformal}
     1- \alpha \le \mathbbm{P}(Y_{n+1}\in {C_n}(X_{n+1})) \le 1-\alpha+\frac{2}{|I_2|+2},
\end{align}
see \cite{lei2018}. 
By introducing additional randomization during the calibration step, the prediction interval can be tuned to obey $\mathbbm{P}(Y_{n+1}\in  {C_n}(X_{n+1}))=1-\alpha$, see \cite{vovk2005algorithmic}.  This adjustment is not critical here and we omit the details.

\textbf{Locally adaptive conformal prediction} (Local for short) \cite{lei2018} introduces a score function that aims to make conformal prediction adapt to situations where the spread of the distribution of $Y$ varies significantly with the observed features $X$. On the training set, run an algorithm $\mathbbm{A}$ to fit two functions $\mu_0(\cdot)$ and $\sigma_0(\cdot)$, where $\mu_0(X)$ estimates the conditional mean $\mathbb{E}[Y \mid X]$, and $\sigma_0(X)$ the dispersion around the conditional mean, frequently chosen as the conditional mean absolute deviation (MAD), $\mathbb{E} [|Y-\mu_0(X)| \mid X]$. With $f=(\mu_0,\sigma_0)$, the locally adaptive (Local) score function is:
\begin{align}
E(x,y;f) = |y - \mu_0(x)|/\sigma_0(x).\label{local.conf.score}
\end{align}
For a new observation $X_{n+1}$, the conformalized prediction interval  (\ref{eq: general.conf.int}) takes on the simplified expression 
    $[{\mu}_0(X_{n+1})-{Q}_{1-\alpha}(E,I_2){\sigma}_0(X_{n+1}), {\mu}_0(X_{n+1})+{Q}_{1-\alpha}(E,I_2){\sigma}_0(X_{n+1})]$.

\textbf{Conformalized quantile regression} (CQR) \cite{romano2019conformalized} also aims to adapt to heteroskedasticity by calibrating conditional quantiles, which often results in  shorter prediction intervals.  Apply quantile regression to produce a pair of estimated quantiles $(\hat{q}_{\alpha/2}(x),\hat{q}_{1-\alpha/2}(x))$, where $\hat{q}_\beta(X)$ is the estimated $\beta$th quantile of the conditional distribution of $Y$. The CQR score function is defined as
\begin{align}\label{score.cqr}
    E(x,y;f)= \max\{\hat{q}_{\alpha/2}(x)-y,y-\hat{q}_{1-\alpha/2}(x)\},
\end{align}
where $f=(\hat{q}_{\alpha/2},\hat{q}_{1-\alpha/2})$.  
For a new observation $X_{n+1}$, following (\ref{eq: general.conf.int}) yields the prediction interval
\begin{align}
    \left[\hat{q}_{\alpha/2}(X_{n+1})-{Q}_{1-\alpha}(E,I_2),\hat{q}_{1-\alpha/2}(X_{n+1})+{Q}_{1-\alpha}(E,I_2)\right].\label{eq:conf.cqr}
\end{align}

\textbf{Generalized conformity score families}. To construct a Local conformity score, we estimate two functions $\mu_0(\cdot)$ and $\sigma_0(\cdot)$ to plug into \eqref{local.conf.score}. Since these components are constructed without looking at performance downstream, it is reasonable to imagine that other choices may enjoy enhanced properties. How then should we systematically select $\mu(\cdot)$ and $\sigma(\cdot)$? To address this, we define a generalized Local score family $\mathcal{F}$ containing all potential score functions of the form
\begin{align}\label{eq:local.gen}
    \mathcal{F}:=\{E(\cdot,\cdot  {;f}):E(x,y; f)=|y-\mu{(x)}|/\sigma(x),\sigma(\cdot) >0\},
\end{align}
  {where $f=(\mu,\sigma)$.} For each $E(\cdot,\cdot  {;f}) \in \mathcal{F}$, the conformalized prediction interval is given by
\begin{align}\label{eq:conf.local.b}
    \left[{\mu}(X)-{Q}_{1-\alpha}(E,I_2){\sigma}(X), {\mu}(X)+{Q}_{1-\alpha}(E,I_2){\sigma}(X)\right].
\end{align}

Turning to CQR, one notable limitation is the uniform adjustment of prediction intervals by the constant factor ${Q}_{1-\alpha}(E,I_2)$, as shown in (\ref{eq:conf.cqr}). This approach is suboptimal in the presence of heteroskedasticity, as it applies an identical correction to prediction intervals of varying widths for each $X=x$. 
{Thus, simply updating the fitted quantiles $(\hat{q}_{\alpha},\hat{q}_{1-\alpha/2})$ and plugging them into the original score function would be inadequate, as the structure of the original score imposes significant limitations on the effectiveness of conformalized prediction intervals.} To address this, several variants including CQR-m \cite{kivaranovic2020adaptive} and CQR-r \cite{Sesia_2020} have been proposed.   {Focusing on CQR-r, it employs a flexible score function, defined as $
    E(x,y;f)= \max\{\hat{q}_{\alpha/2}(x)-y,y-\hat{q}_{1-\alpha/2}(x)\}/(\hat{q}_{1-\alpha/2}(x)-\hat{q}_{\alpha/2}(x))$, with $f=(\hat{q}_{\alpha/2},\hat{q}_{1-\alpha/2},\hat{q}_{1-\alpha/2}-\hat{q}_{\alpha/2})$.}
Following (\ref{eq: general.conf.int}), conformalized prediction intervals become
\begin{align}
    \left[\hat{q}_{\alpha/2}(X)-\hat{\sigma}(X){Q}_{1-\alpha}(E,I_2),\hat{q}_{1-\alpha/2}(X)+\hat{\sigma}(X){Q}_{1-\alpha}(E,I_2)\right],
\end{align}
where $\hat{\sigma}=\hat{q}_{1-\alpha/2}-\hat{q}_{\alpha/2}$. Intuitively, the adjusted score function allows prediction bands to adjust in proportion to their width, instead of adding a constant shift as in CQR. However, despite the intuitive appeal of adjusted scores as a seemingly more reasonable ``allocation'' of the conformal correction, empirical studies reveal that they do not result in narrower prediction intervals when compared to CQR \cite{Sesia_2020}. This phenomenon is largely due to the uniform direction of the conformal adjustment, represented by ${Q}_{1-\alpha}(E,I_2)$, across all observations. In particular, if ${Q}_{1-\alpha}(E,I_2)<0$, indicating that the true target $y$ predominantly lies within the estimated quantile range $[\hat{q}_{\alpha/2},\hat{q}_{1-\alpha/2}]$, there is a uniform narrowing of the predicted interval across all samples.

In light of these insights, we propose a novel score family, $\mathcal{H}$, designed to augment the flexibility of the conformity score functions:
\begin{align}\label{eq:cqr.gen}
\hspace*{-0.6em}\mathcal{H}:=\bigl\{E(\cdot,\cdot  {;f}):E(x,y  {;f})=\max\left\{{\mu_1}(x)-y,y-\mu_2(x)\right\}/{\sigma}(x),\mu_1(\cdot)\leq \mu_2(\cdot),\sigma(\cdot)> 0\bigr\},
\end{align}
  {where $f=(\mu_1,\mu_2,\sigma)$}, which leads to conformalized prediction intervals of the form 
\begin{align}
 [{\mu_1}(X)&-{\sigma}(X){Q}_{1-\alpha}(E,I_2),{\mu_2}(X)+{\sigma}(X){Q}_{1-\alpha}(E,I_2)].\label{eq:conf.cqr.gen}
\end{align}
{Notably, $\mathcal{H}$ includes the Local, CQR, and CQR-r scores as special cases.}

\section{Boosted conformal procedure}\label{subsec:boosted.cp}

It is clear from above that a model is trained to produce a conformity score $E(\cdot,\cdot;f)$; e.g., we may learn a regression function $\hat{\mu}(\cdot)$ to plug it into a score function $|y-\hat{\mu}(x)|$. 
To overcome the limitation of working with an arbitrarily selected score function, we introduce a boosting step before calibration, see Figure \ref{fig:diagram}. In a nutshell, we use gradient boosting to iteratively improve upon a predefined score $E(\cdot,\cdot;f)$ now denoted as $E^{(0)}(\cdot,\cdot)$, where the superscript indicates the $0$th iteration.  

To achieve this, we construct a task-specific loss function $\ell$, which takes a dataset $\mathcal{D}$ and a score function $E(\cdot,\cdot  {;f})$ as inputs, and outputs $\ell(E(\cdot,\cdot  {;f});\mathcal{D})$ measuring how closely the conformalized prediction interval aligns with the analyst's objective. This loss function $\ell$ is designed to be differentiable with respect to each of the model components produced by the training algorithm. Importantly, it does not require knowledge of the gradient of $f(x)$ with respect to $x$. In the example above, taking the labels as fixed, this means that for each feature $x_i \in \mathcal{D}$, $i=1,\ldots,n$,   {if we set $\hat{y}_i=\hat{\mu}(x_i)$, then the loss $\ell(E(\cdot,\cdot);\mathcal{D})$ is a function of $\{\hat{y}_i\}_{i=1}^n$, and the derivative $\partial\ell(E(\cdot,\cdot);\mathcal{D})/\partial\hat{y}_i$ is well defined.}  
 {In Sections \ref{subsec:eval.cond} and \ref{subsec:len.der}, we present examples of such derivatives.}

 Each boosting iteration updates the score function sequentially, employing a gradient boosting algorithm such as XGBoost \cite{chen2016xgboost} or LightGBM \cite{ke2017lightgbm}. These algorithms accept as input a dataset $\mathcal{D}$, a base score function $E(\cdot,\cdot;f)$, a custom loss function $\ell$, gradients of $\ell$ with respect to $f$ (denoted $\nabla_f \ell$), and a number of boosting rounds $\tau$. We may write the boosting procedure as
\begin{align}\label{eq:gb.step}
    (E^{(0)}(\cdot,\cdot),\ldots,E^{(\tau)}(\cdot,\cdot))=\text{GradientBoosting}(\mathcal{D},E(\cdot,\cdot;f),\ell,\nabla_f\ell,\tau).
\end{align}
This yields a boosted score function $E^{(\tau)}(\cdot,\cdot)$, which is then used for calibration and for constructing prediction intervals. The number $\tau$ is calculated using $k$-fold cross-validation on the training dataset, selecting $\tau$ from potential values up to a predefined maximum $T$ (e.g., 500).  
We partition the dataset into $k$ folds and for each $j = 1, \ldots, k$,  we hold out fold $j$ for sub-calibration and the remaining $k-1$ folds for sub-training. We apply $T$ rounds of gradient boosting (\ref{eq:gb.step}) on the sub-training data, generating $T+1$ candidate score functions   {$E_j^{(0)}(\cdot,\cdot),\ldots, E_j^{(T)}(\cdot,\cdot)$}. 
Each score function is then evaluated on sub-calibration data, using the loss function $\ell$ to compute losses at all epochs, i.e., for each fold $j=1,\ldots k$,
\begin{align*}
    \{L_j^{(t)}\}_{t=0}^T= \{\ell(E_j^{(t)};\text{fold}_j)\}_{t=0}^T.
\end{align*}
Last, $\tau$ is selected as the round that minimizes the average loss across all $k$ folds:
\begin{align}\label{eq:cv.step}
    \tau=\arg{\textstyle\min}_{0\leq t\leq T} {\textstyle\sum}_{j=1}^kL_j^{(t)},
\end{align}
see Figure \ref{fig:diagram_cv}. This cross-validation step simulates the calibration step in conformal prediction and effectively prevents the overfitting of the score function. 

Since boosting is conducted on the training data, the boosted procedure satisfies {the same} marginal coverage guarantee as the split conformal procedure, as formalized below.
\begin{prop}\label{prop:cov}
    Let $\{(X_i,Y_i)\}_{i=1}^m$ be the held out calibration set, and $(X_{m+1},Y_{m+1})$ be a pair of new observation. If the $m+1$ samples are exchangeable, and ties between $\{E^{(\tau)}(X_i,Y_i)\}_{i=1}^m$ occur with probability zero, the confromalized prediction interval (\ref{eq: general.conf.int}) computed from score function $E^{(\tau)}(\cdot,\cdot)$ satisfies 
{the} coverage guarantee (\ref{eq:conformal}).
\end{prop}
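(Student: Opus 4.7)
The plan is to reduce the claim to the standard split conformal coverage result of Lei et al. 2018 by observing that the boosted score $E^{(\tau)}(\cdot,\cdot)$ is a measurable function of the training data alone, and therefore behaves like a ``fixed'' score from the vantage point of the calibration set and the test point.

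First, I would unpack the construction of $E^{(\tau)}$. Both the gradient boosting iterations producing the sequence $E^{(0)},\ldots,E^{(T)}$ via (\ref{eq:gb.step}) and the cross-validation rule (\ref{eq:cv.step}) that selects $\tau$ operate entirely on the training fold $I_1$; neither step sees the calibration set $\{(X_i,Y_i)\}_{i=1}^m$ nor the test point $(X_{m+1},Y_{m+1})$. Hence $E^{(\tau)}$ is determined by the training data.

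Second, I would condition on the training data. Under this conditioning, $E^{(\tau)}$ is a deterministic function, and the exchangeability hypothesis on the $m+1$ calibration-plus-test samples implies that the scores
\begin{align*}
\{E^{(\tau)}(X_i,Y_i)\}_{i=1}^{m+1}
\end{align*}
are exchangeable. Together with the hypothesis that ties among the calibration scores occur with probability zero (which, combined with exchangeability, rules out ties in the augmented collection as well), this is exactly the setup in which the split conformal argument applies. The rank of $E^{(\tau)}(X_{m+1},Y_{m+1})$ among the $m+1$ scores is uniform on $\{1,\ldots,m+1\}$, and the event $\{Y_{m+1}\in C_n(X_{m+1})\}$ in (\ref{eq: general.conf.int}) is precisely the event that this rank is at most $\lceil (1-\alpha)(m+1)\rceil$. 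The two-sided bound in (\ref{eq:conformal}) then follows conditionally on the training data, and integrating over the training data yields the marginal claim.

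The only subtle point — and really the only thing that could go wrong — is ensuring that no information from the calibration set or the test point leaks into $E^{(\tau)}$; once this separation is explicit, the proof is essentially an appeal to the standard split conformal theorem. I do not anticipate any significant obstacle beyond carefully stating the measurability of $E^{(\tau)}$ with respect to the training sigma-algebra and verifying that the no-ties assumption transfers from $\{E^{(\tau)}(X_i,Y_i)\}_{i=1}^m$ to the augmented set including the test point, which is immediate from exchangeability.
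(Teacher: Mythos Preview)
Your proposal is correct and matches the paper's own reasoning: the paper does not give a formal proof of this proposition but simply notes, immediately before stating it, that ``since boosting is conducted on the training data, the boosted procedure satisfies the same marginal coverage guarantee as the split conformal procedure.'' Your write-up is a faithful expansion of exactly that observation---$E^{(\tau)}$ is a function of $I_1$ alone, so conditioning on the training fold reduces the claim to the standard split conformal theorem of \cite{lei2018}.
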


\begin{figure}[t]
    \centering
    \includegraphics[width=\linewidth]{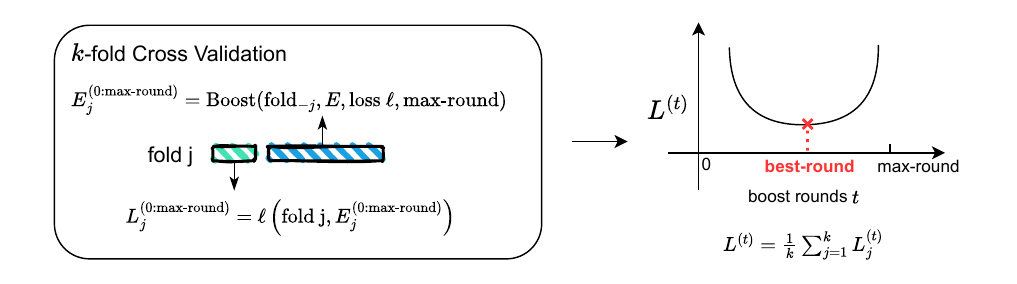}
    \vspace{-0.5cm}
    \caption{Schematic drawing showing the selection of the number of boosting rounds via cross-validation. Left: we hold out fold $j$, and use the remaining $k-1$ folds to generate candidate scores $E_j^{(t)}$, $t = 0, \ldots, \text{max-round}$. 
    The performance of each score is evaluated on fold $j$ using the loss function $\ell$. Right: $\text{best-round}$ minimizes the average loss across all $k$ folds.
    A detailed description of the procedure is presented in Algorithm \ref{alg:cap}.}
    \label{fig:diagram_cv}
\end{figure}

\textbf{Searching within generalized conformity score families}.
To update the Local score function (\ref{local.conf.score}), we search within the generalized score family $\mathcal{F}$ (\ref{eq:local.gen}). First, we initialize $\mu^{(0)}=\mu_0$ and $\sigma^{(0)}=\sigma_0$. After completing $\tau$ iterations of boosting on the training set, we obtain the boosted score function $E^{(\tau)}(x,y)=|y - \mu^{(\tau)}(x)|/\sigma^{(\tau)}(x)$. Notably, we can update any score function within $\mathcal{F}$. For instance, to update  $E(x,y;f)=|y-\hat{\mu}(x)|$, we simply initialize $\mu^{(0)}=\hat{\mu}$, and take $\sigma^{(0)}$ to be the constant function equal to one. Similarly, to update the CQR score function (\ref{score.cqr}), we search within the score family $\mathcal{H}$ (\ref{eq:cqr.gen}). First, we initialize a triple $\mu_1^{(0)}=\hat{q}_{\alpha/2}$, $\mu_2^{(0)}=\hat{q}_{1-\alpha/2}$, $\sigma^{(0)}=\hat{q}_{1-\alpha/2}-\hat{q}_{\alpha/2}$. After $\tau$ boosting rounds, we obtain the boosted score function $E^{(\tau)}(x,y) = \max\{{\mu_1^{(\tau)}}(x)-y,y-{\mu_2}^{(\tau)}(x)\}/{\sigma}^{(\tau)}(x)$.

\begin{algorithm}[ht]
\caption{Boosting stage}\label{alg:cap}
\textbf{Input:}
\begin{algorithmic} 
\State $\text{Training data }(X_i,Y_i)\in\mathbbm{R}^p\times \mathbbm{R}$, $i=1,...,n$; base conformity score function $E^{(0)}(\cdot,\cdot)$
\State Loss function $\ell$; target mis-coverage level $\alpha\in (0,1)$
\State Number $k$ of cross-validation folds; maximum boosting rounds $T$ 
\end{algorithmic}
\textbf{Procedure:}
\begin{algorithmic} 
\State Randomly divide $\{1,...,n\}$ into $k$ folds
\For{\texttt{$j \gets 1$ to $k$}}
        \State Set fold $j$ as sub-calibration set, and the remaining $k-1$ folds as sub-training set
        \State On the sub-training set, call GradientBoosting (\ref{eq:gb.step}) to obtain candidate scores $\{E_j^{(t)}\}_{t=0}^T$
    \State On the sub-calibration set, evaluate $L_j^{(t)}=\ell(E_j^{(t)})$, $t=0,\ldots,T$
      \EndFor
    \State Set boosting rounds $\tau\gets \arg\min_t \frac{1}{k}\sum_{j=1}^kL_j^{(t)}$ as in (\ref{eq:cv.step})
    \State On the training set, call GradientBoosting (\ref{eq:gb.step}) to obtain boosted functions $\{E^{(t)}\}_{t=0}^{\tau}$
\end{algorithmic}
\textbf{Output:}
\begin{algorithmic}
 \State Boosted conformity score function $E^{(\tau)}(\cdot,\cdot)$
\end{algorithmic}
\end{algorithm}

\section{Related Works}
\textbf{Adapting the classical conformal procedure} to improve properties of the conformalized intervals has been one of the primary focuses of recent literature. Noteworthy contributions---including CF-GNN \cite{huang2023uncertainty} and ConTr \cite{stutz2022learning}---approach this problem by introducing modifications to the training stage of the procedure. As outlined in Section \ref{subsec:prelim}, a model is trained to produce a score function $E(\cdot,\cdot;f)$. The model $f$ usually depends on a set of model parameters, e.g., neural network parameters $\theta$. Denote the trained model $f$ by $f_{\theta}$. CF-GNN and ConTr retrain or fine-tune the model by using a carefully constructed loss function, which may aim to produce narrower
prediction intervals or prediction sets of reduced cardinality in classification problems. This process generates a new set of model parameters $\theta'$. The new model $f_{\theta'}$ is then plugged into the {\em same} predefined conformity score function---namely CQR \cite{huang2023uncertainty} or the adaptive prediction set score (APS) \cite{stutz2022learning}---to produce $E(\cdot, \cdot, f_{\theta'})$. 

There are two primary limitations. First, the score function imposes constraints on the properties of conformalized intervals as explained in Section \ref{subsec:prelim}. Our approach introduces more flexibility by constructing a family of generalized score functions that is a superset of 
$\{E(\cdot,\cdot;f_{\theta}):\theta\in\Theta\}$, where $\Theta$ is the parameter space of the training model. 
This family is  strategically designed to contain an 
oracle conformity score ideally suited to the task at hand, e.g., achieving exact conditional coverage. Second, current methodologies necessitate fine-tuning or retraining models from scratch, requiring both access to the training model and significant computational resources. In contrast, our boosted conformal method operates directly on model predictions and circumvents these issues.

\textbf{Conditional coverage} of conformalized prediction intervals has also attracted significant interest, characterized by efforts to establish theoretical guarantees and achieve numerical improvements. Prior work established an impossibility result \cite{vovk2012conditional,barber2020}, which states that exact conditional coverage in finite samples cannot be guaranteed without making assumptions about the data distribution. Subsequently, Gibbs et al. \cite{gibbs2023conformal} developed a modified conformal procedure that guarantees conditional coverage for predefined protected sub-groups, i.e.~subsets of the feature space. 
Our approach differs from the previous works by introducing a numerical method directly aimed at improving the conditional coverage, $\mathbbm{P}(Y\in {C_n(}X)|X=x)$, across all potential values of $x$. 

\section{Boosting for conditional coverage}\label{sec:boost.cond.cov}
Maintaining valid marginal coverage, our goal is to produce a prediction interval ${C_n}$ obeying 
\begin{align}\label{eq:cond.cov.condition}
    \mathbbm{P}(Y\in {C_n}(X_{n+1})|X_{n+1}=x) \approx 1-\alpha
\end{align}
for all possible values of $x$. To this end, we present a loss function that quantifies the conditional coverage rate of any prediction interval. Requiring merely a dataset $\mathcal{D}$ and a prediction interval ${C_n(\cdot)}$ as inputs, it also serves as an effective evaluation metric, which may be of independent interest.

\subsection{A measure for deviation from target conditional coverage}\label{subsec:eval.cond}
{From now on, we let $E$ be the score function $E(\cdot,\cdot;f)$.} 
Set $\mathcal{D}=\{(X_i,Y_i)\}_{i=1}^n$ and denote by ${C_n(\cdot)}$ the conformalized prediction interval constructed from ${E}$. We shall assess the deviation of ${C_n(\cdot)}$ from the target conditional coverage by means of Contrast Trees \cite{Friedman2020}. As background, a contrast tree 
iteratively identifies splits within the feature space $\mathcal{X}$ in a greedy fashion, aiming to maximize absolute within-group deviations from the target conditional coverage rate $(1-\alpha)$. For a subset $R$ of the data point indices $[n]$, let $\mathcal{D}_R=\{X_j,Y_j\}_{j\in {R}}$. The absolute within-group deviation is computed as
\begin{align}\label{eq:lack.condcov.measure.i}
d\left({C_n(\cdot)};\mathcal{D}_{{R}}\right)=\left| {|{R}|}^{-1}{\textstyle\sum}_{j\in {R}}\mathbbm{1}(Y_j\in {C_n(}X_j))- (1-\alpha) \right|.
\end{align}
The overall empirical maximum deviation is then defined as
\begin{align}\label{eq:lack.condcov.measure}
\ell_M\left({E};\mathcal{D}\right) = {\textstyle\max}_{1\leq m\leq M}{d\left({C_n(\cdot)};\mathcal{D}_{\hat{R}_m}\right)},
\end{align}
where  $\hat{R}_1 \cup \dots \cup \hat{R}_M$ is a partition of $[n]$, which itself depends on $E$ and $\mathcal{D}$. Specifically, it is computed by running a contrast tree for $M$ iterations. At each iteration, the algorithm not only seeks to isolate regions with large deviations but also discourages {splits where any subset $\hat{R}_m$ is too small}.

To update score functions via gradient boosting as described in (\ref{eq:gb.step}), we would need a differentiable approximation of the maximum deviation. {To this end, we construct approximations for the following three components of the loss function. With an abuse of notation, in subsequent discussions, we shall employ the same notations to denote these differentiable approximations. }
\begin{enumerate}
    \item Approximation for the prediction interval ${C_n(\cdot)}$ in (\ref{eq:lack.condcov.measure.i}): the prediction interval is formulated as (\ref{eq:conf.local.b}) for the generalized Local score, and as (\ref{eq:conf.cqr.gen}) for the generalized CQR score. Denote the upper and lower limits of ${C_n(\cdot)}$ by $u(\cdot)$ and $l(\cdot)$. 
    We approximate the empirical quantile $Q_{1-\alpha}(E,I_2)$ in $u(\cdot)$ and $l(\cdot)$ with a smooth quantile estimator {$Q^s_{1-\alpha}$}. 
Given $r$ scalars $\{z_i\}_{i=1}^r$, ${Q^s_{1-\alpha}}$ is constructed as:
\begin{align}\label{eq:qs}
    {Q^s_{1-\alpha}}(\{z_i\}_{i=1}^r):= \langle\text{HD}(r),s(\mathbf{z})\rangle,
\end{align}
where $\langle\cdot,\cdot\rangle$ represents the dot product. Here, $\text{HD}(r)=[W_{r,1},...,W_{r,r}]$ is the weight vector corresponding to the Harrel-Davis distribution-free empirical quantile estimator \cite{HarrellDavis}, and $s(\mathbf{z})$ is a differentiable ordering $\{\Tilde{z}_{(i)}\}_{i=1}^r$, arranged in the ascending order. {In practice,  the derivative of $s(\textbf{z})$ with respect to each $z_i$ is given by the package developed in \cite{pmlr-v119-blondel20a}.} This approach is a smooth approximation of the Harrel-Davis quantile estimator $Q^{\text{HD}}_{1-\alpha}$, 
constructed as a linear combination of the order statistics, $Q^{\text{HD}}_{1-\alpha}=\langle\text{HD}(r),\{z_{(i)}\})\rangle={\textstyle\sum}_{i=1}^rW_{r,i}z_{(i)}$, where $W_{r,i}$ takes the value $I_{(1-\alpha)(r+1),\alpha(r+1)}(i/r)-I_{(1-\alpha)(r+1),\alpha(r+1)}((i-1)/r)$ and $I_{a,b}(x)$ represents the incomplete beta function. 
    \item Approximation for absolute deviation $d_i$ (\ref{eq:lack.condcov.measure.i}): the indicator function  in  (\ref{eq:lack.condcov.measure.i}) can be approximated by the product of two sigmoid functions,
    \begin{align*}
        \mathbbm{1}(Y_j\in {C_n}(X_j))&= \mathbbm{1}({u}(X_j)-Y_j\geq 0) \mathbbm{1}(Y_j-{l}(X_j)\geq 0)\\
        &\approx S_{\tau_1}({u}(X_j)-Y_j)S_{\tau_1}(Y_j-{l}(X_j)),
    \end{align*}
    where $\tau_1$ is a parameter, trading off smoothness and quality of the approximation. The sigmoid function $S_{\tau_1}(x)$ is defined as $        S_{\tau_1}(x) = (1+e^{-\tau_1 x})^{-1}$.

    \item Approximation for maximum deviation: we employ a log-sum-exp  function \cite{boyd2004convex} 
    to derive the differentiable approximation of $\ell_M$ as
    \begin{align}
      {\ell}_M\left({E};\mathcal{D}\right):={\tau_2}^{-1}\log{\textstyle\sum}_{m=1}^M\exp{\left(\tau_2 {d}_m({C_n}(\cdot);\mathcal{D}_m)\right)},\label{eq:proxy}
    \end{align}
    where $\tau_2$ is a parameter, serving the same purpose as $\tau_1$.
\end{enumerate}

{Here, we demonstrate calculating the derivative of the smooth approximation (\ref{eq:proxy}) with respect to each component of the generalized Local score, expanding it as follows:}
\begin{align*}
    &{\ell}_M\left({E};\mathcal{D}\right)={\tau_2}^{-1}\log\sum_{m=1}^M\exp{\left(\tau_2 \left| {|R_m|}^{-1}{\textstyle\sum}_{j\in R_m} S_{\tau_1}({u}(X_j)-Y_j)S_{\tau_1}(Y_j-{l}(X_j))- (1-\alpha) \right|\right)},
\end{align*}
where 
\begin{align*}
    S_{\tau_1}({u}(X_j)-Y_j)=\left(1+\exp\left[-\tau_1 (\mu_j+{Q}_{1-\alpha}^s(\{E_i\}_{i=1}^n)\sigma_j-Y_j)\right]\right)^{-1},\\
    S_{\tau_1}(Y_j-{l}(X_j))=\left(1+\exp\left[-\tau_1 (Y_j-\mu_j+{Q}_{1-\alpha}^s(\{E_i\}_{i=1}^n)\sigma_j)\right]\right)^{-1},
\end{align*}
with $\mu_i=\mu(X_i)$, $\sigma_i=\sigma(X_i)$, $E_i = |Y_i-\mu_i|/\sigma_i$. As a result,  for each feature $X_i$ within $\mathcal{D}$, we can evaluate $\partial{\ell}_M\left({E};\mathcal{D}\right)/\partial\mu_i$ and $\partial{\ell}_M\left({E};\mathcal{D}\right)/\partial\sigma_i$ via the chain rule. 

\subsection{Boosting score functions for conditional coverage}\label{sec:condcov.semi}
Since the empirical maximum deviation $\ell_M$ (\ref{eq:lack.condcov.measure}) is non-differentiable, we opt for the differentiable approximation during the gradient boosting step (\ref{eq:gb.step}). Nonetheless, we utilize the original $\ell_M$ to select the number of boosting rounds as in step (\ref{eq:cv.step}) and to evaluate the conditional coverage of the conformalized prediction interval on the test set. 
 
\subsubsection{Theoretical guarantees}
The oracle score function achieving conditional coverage as defined in (\ref{eq:cond.cov.condition}) belongs to both proposed generalized score families. 

\begin{prop}[Asymptotic expressiveness]\label{thm:uni.condcov}
    Let $\{X_i,Y_i\}_{i=1}^n$ be i.i.d. with continuous joint probability density distribution. Under the split conformal procedure, for any target coverage rate $1-\alpha$, as $n\rightarrow \infty$, there exists $(\mu^*,\sigma^*)$ and $(\mu_1^*,\mu_2^*,\sigma^*)$ such that the corresponding generalized Local (\ref{eq:local.gen}) and CQR  (\ref{eq:cqr.gen}) score functions recover conditional coverage at rate $1-\alpha$, as defined in (\ref{eq:cond.cov.condition}).
\end{prop}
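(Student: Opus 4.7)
The plan is to exhibit explicit oracle choices in each of the two families that use the true conditional quantiles of $Y \mid X$ directly, and then argue that in the large-sample limit the conformal correction converges to a deterministic constant, so that the resulting interval is exactly the conditional $(1-\alpha)$-band. Let $q_\beta(x)$ denote the conditional $\beta$-quantile of $Y\mid X=x$, which is well defined under the continuity hypothesis. For the generalized Local family $\mathcal{F}$, I would set
$$\mu^*(x) = \tfrac{1}{2}\bigl(q_{\alpha/2}(x)+q_{1-\alpha/2}(x)\bigr),\qquad \sigma^*(x) = \tfrac{1}{2}\bigl(q_{1-\alpha/2}(x)-q_{\alpha/2}(x)\bigr),$$
so that $E^*(X,Y)=|Y-\mu^*(X)|/\sigma^*(X) \leq 1$ iff $Y\in[q_{\alpha/2}(X),q_{1-\alpha/2}(X)]$. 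For the generalized CQR family $\mathcal{H}$, I would set $\mu_1^*=q_{\alpha/2}$, $\mu_2^*=q_{1-\alpha/2}$ and let $\sigma^*$ be any strictly positive function, so that $E^*(X,Y)\le 0$ iff $Y\in[q_{\alpha/2}(X),q_{1-\alpha/2}(X)]$.

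The key observation in both cases is that the $(1-\alpha)$-th conditional quantile of $E^*(X,Y)$ given $X=x$ is constant in $x$: equal to $1$ in the Local construction and to $0$ in the CQR construction. By the tower property this constant is also the marginal $(1-\alpha)$-th quantile of $E^*$, call it $c^*$. Standard empirical-quantile convergence (e.g., Glivenko--Cantelli together with continuity of the distribution of $E^*$ at $c^*$, which follows from continuity of the joint density) gives $Q_{1-\alpha}(E^*,I_2)\convas c^*$ as $|I_2|\to\infty$. Substituting $c^*$ into \eqref{eq:conf.local.b} and \eqref{eq:conf.cqr.gen} respectively, the limiting conformal interval is exactly $[q_{\alpha/2}(x),q_{1-\alpha/2}(x)]$ in both cases, which has conditional coverage equal to $1-\alpha$ at every $x$ by definition of the conditional quantiles.

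To finish I would translate the almost-sure convergence of $Q_{1-\alpha}(E^*,I_2)$ into pointwise-in-$x$ convergence of the conditional coverage. This amounts to showing that, for a.e.\ $x$, the map $q\mapsto \mathbbm{P}(Y\in[\mu^*(x)-q\sigma^*(x),\mu^*(x)+q\sigma^*(x)]\mid X=x)$ (and its CQR analogue) is continuous at $c^*$; this follows from the assumed continuity of the conditional density of $Y\mid X=x$. Finally, invoking Proposition~\ref{prop:cov} on marginal coverage ensures that the construction remains a valid conformal procedure, so the asymptotic improvement is to conditional---not just marginal---coverage.

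The main obstacle, and the only place that requires genuine care rather than bookkeeping, is the transition from ``$Q_{1-\alpha}(E^*,I_2)\to c^*$ a.s.'' to ``conditional coverage converges to $1-\alpha$ for every $x$.'' Pointwise convergence of the quantile is cheap, but one must ensure the conditional probability mass near the endpoints $q_{\alpha/2}(x),q_{1-\alpha/2}(x)$ is not pathological; the continuity of the joint density supplied by the hypothesis is precisely the regularity needed to rule this out and to make the argument go through uniformly where needed.
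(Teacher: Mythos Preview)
Your proposal is correct and, for the generalized Local family, identical to the paper's: both set $\mu^*=(q_{\alpha/2}+q_{1-\alpha/2})/2$ and $\sigma^*=(q_{1-\alpha/2}-q_{\alpha/2})/2$ so that the population $(1-\alpha)$-quantile of the score equals $1$. For the generalized CQR family the two proofs diverge slightly in parametrization: the paper reduces to the Local case by taking $\mu_1^*=\mu_2^*=(q_{\alpha/2}+q_{1-\alpha/2})/2$ and $\sigma^*=(q_{1-\alpha/2}-q_{\alpha/2})/2$, again forcing the population quantile to $1$, whereas you take the more natural $\mu_1^*=q_{\alpha/2}$, $\mu_2^*=q_{1-\alpha/2}$ with arbitrary $\sigma^*>0$, which forces the population quantile to $0$. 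Both choices are valid and yield the same limiting interval; the paper's choice has the minor advantage that it factors through a single expressiveness lemma covering both families at once, while yours keeps the CQR score in its canonical form. You are also more explicit than the paper about the passage from empirical to population quantile (Glivenko--Cantelli plus continuity of the score distribution) and about continuity of the conditional coverage map at the limit; the paper simply works at the population level and leaves these routine steps implicit.
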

It goes without saying that there is no reason to assume that the optimal  $\mu^*$ corresponds to the conditional mean, median or any quantile of $Y$ given $X$, or that the optimal $\sigma^*$ corresponds to the standard deviation or the mean absolute deviation of $Y$ given $X$, as in the original Local score (\ref{local.conf.score}). That said, our greedy strategy has no guarantee on global optimality and this is why the choice of the starting point---whether it is the Local or CQR score function---plays a role in the performance.

\subsubsection{Empirical results on real data}\label{sec:emp.res}
We apply our boosted conformal procedure to the 11 datasets previously analyzed in \cite{Sesia_2020,romano2019,kivaranovic2020adaptive}. Details on the datasets are provided in Section \ref{sec.app: data} in the Appendix. In each dataset, we randomly hold out $20\%$ as test data. All experiments are repeated 10 times, starting from the data splitting. We refer to Section \ref{sec:model.hyperparams} for details on the models and hyper-parameters we employ for the training and boosting stages. 

We evaluate the conditional coverage of the prediction intervals as the maximum within-group deviations across a partitioned test set (\ref{eq:lack.condcov.measure}). This partition is obtained through a contrast tree algorithm described in Section \ref{subsec:eval.cond}. Figure \ref{fig:condcov.local.meps19} illustrates the comparison between miscoverage rates of prediction intervals at each leaf of the contrast tree. These intervals are derived under the classical Local conformal procedure and our boosted conformal procedure. Notably, the conditional coverage of the boosted prediction interval more closely aligns with the target rate $1-\alpha$.

 The experiment results summarized in Table \ref{tab:condcov} indicate that applying boosting significantly enhances the performance of the baseline Local procedure. In contrast, boosting on CQR does not yield significant improvements---a sign that CQR already targets conditional coverage. (Before boosting, the prediction intervals generated by the baseline Local procedure exhibit conditional coverage deviations up to three times greater than those of the baseline CQR procedure.)  It is noteworthy, however, that after boosting, the conditional coverage of the Local procedure improves to a level comparable to that of the boosted CQR procedure. While generally slightly less effective, nevertheless surpasses the performance of the boosted CQR procedure in two cases. Results on the remaining datasets are deferred to Tables \ref{tab:condcov.local.full} and \ref{tab:condcov.cqr.full}.

\begin{figure}
\centering
 \includegraphics[width=\linewidth]{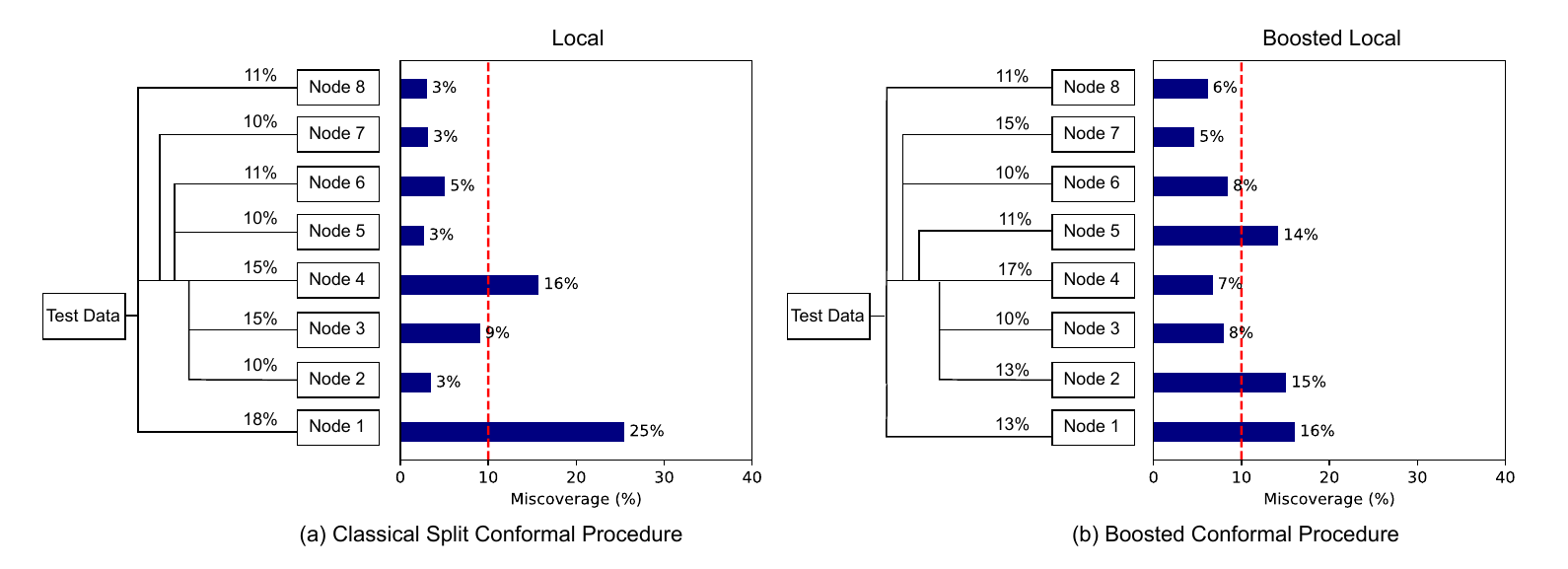}
 \vspace{-0.5cm}
    \caption{Comparison of test set conditional coverage evaluated on the dataset {meps-19}: (a) shows the classical Local-type conformal procedure and (b) our boosted Local-type conformal procedure.  The target miscoverage rate is set to $\alpha=10\%$ (red). Miscoverage rate is computed at each leaf of the contrast tree, constructed to detect deviation from the target rate. 
    Each leaf node is labeled with its size, namely, the fraction of the test set it represents.}
    \label{fig:condcov.local.meps19}
\end{figure}
 \begin{table}[tb]
  \begin{center}
  \vspace{-0.4cm}
    \caption{Test set maximum deviation loss $\ell_M$ evaluated on various conformalized intervals. The best result achieved for each dataset is highlighted in bold. }
    \vspace{0.3cm}
    \label{tab:condcov}
    {\footnotesize
    \begin{tabular}{S S S S S S S}
    \hline
      \multicolumn{7}{c}{Max. Conditional Coverage Deviation ($\%$), target miscoverage $\alpha = 10\%$}\\
      \hline
      \multicolumn{1}{c}{\multirow{2}{*}{Dataset}}& \multicolumn{2}{c}{Method} &  \multicolumn{1}{c}{\multirow{2}{*}{Improvement}}& \multicolumn{2}{c}{Method} &  \multicolumn{1}{c}{\multirow{2}{*}{Improvement }}\\ 

      \multicolumn{1}{c}{} & \multicolumn{1}{c}{Local} & $\text{Boosted}$&\multicolumn{1}{c}{} & \multicolumn{1}{c}{CQR} & $\text{Boosted}$&\multicolumn{1}{c}{}\\ 
      \hline
      $\text{bike}$& 10.979432&\text{5.638} & $\text{-48.65}\%$&4.933807&\textbf{4.925}&\text{-0.17}\%\\
      bio&5.303131 & \text{4.862}& $\text{-8.31}\%$&5.068905&\textbf{4.700}&\text{-7.29}\%\\
      community&25.754902 & \text{13.466}& $\text{-47.71}\%$&12.687841&\textbf{12.105}&\text{-4.59}\%\\
      $\text{concrete}$& 10.739634&\text{8.763} & $\text{-18.40}\%$&9.038694&\textbf{8.265}&\text{-8.56}\%\\
      $\text{meps-19}$ & 15.357278& \text{5.656}& $\text{-63.17}\%$&5.506775&\textbf{5.507}&\text{-0.00}\%\\
      $\text{meps-20}$ & 16.938720 & \textbf{6.998}& $\text{-58.69}\%$&7.614116&\text{7.184}&\text{-5.65}\%\\
      $\text{meps-21}$&17.627239 &\textbf{7.832} & $\text{-55.57}\%$&8.164735&\text{8.067}&\text{-1.20}\%\\
   
      \hline
    \end{tabular}}
  \end{center}
\end{table}

\section{Boosting for length}\label{sec:boost.length}
We begin by specifying the oracle prediction interval with minimum length. For a random variable $Z$, the High Density Region (HDR) at a specified significance level $\alpha$, denoted as $\text{HDR}_{\alpha}(Z)$,  is defined as the shortest deterministic interval that covers $Z$ with probability at least $1-\alpha$. The boundaries of $\text{HDR}_{\alpha}(Z)$, the lower limit $Q_{l(\alpha)}$ and the upper limit $Q_{u(\alpha)}$, obey the condition $
    \mathbbm{P}(Z\in [Q_{l(\alpha)},Q_{u(\alpha)}])\geq 1-\alpha$.
For a pair of $(X,Y)$ drawn from $P$, for every value of $x \in \mathbb{R}^p$, the oracle prediction 

interval at that point is expressed as
\begin{align}\label{eq: oracle.len}
    \text{HDR}_{\alpha}(Y|X=x)=\left[Q_{l(\alpha)}(Y|X=x),Q_{u(\alpha)}(Y|X=x)\right].
\end{align}
Before introducing the boosting strategy, we present a word of caution against optimizing exclusively for this objective. Importantly, to maintain valid marginal coverage, the {shortest} prediction interval is prone to overcover when the spread of $Y|X$ (the conditional distribution of $Y$ given $X$) is small, and undercover when the spread of $Y|X$ is large. This may be undesirable.

Similar to {Proposition} \ref{thm:uni.condcov}, we can show that the generalized score families 

exhibit the necessary expressiveness to contain the oracle conformity score, achieving optimal length while ensuring valid marginal coverage. The formal proof is deferred to Section \ref{sec:thm.power}.

\subsection{A measure for length}\label{subsec:len.der}

Consider a dataset $\mathcal{D}=\{(X_i,Y_i)\}_{i=1}^n$ and a score function ${E}$. Denote the corresponding conformalized prediction interval by ${C_n(\cdot)}$, with its quality measured by the average length:
\begin{align}\label{eq:power.loss}
    \ell_L({E};\mathcal{D})=n^{-1}{\textstyle\sum}_{i=1}^n|{C_n(}X_i)|.
\end{align}

To derive a differentiable approximation of {$\ell_L$}, we approximate the empirical quantile $Q_{1-\alpha}$ in the conformalized intervals (\ref{eq:conf.local.b}) and (\ref{eq:conf.cqr.gen}) with the smooth quantile estimator ${Q^s_{1-\alpha}}$ constructed in (\ref{eq:qs}).   {Here, we demonstrate calculating the derivative of the smooth approximation of $\ell_L$ with respect to each component of the generalized Local score, expanding it as follows based on the previously outlined approximation steps:}
\begin{align*}
    \ell_L\left({E};\mathcal{D}\right)=2\Bar{\sigma}{Q}_{1-\alpha}^s(\{E_i\}_{i=1}^n), \qquad E_i = |Y_i-\mu_i|/\sigma_i, 
\end{align*}
with $\mu_i=\mu(X_i)$, $\sigma_i=\sigma(X_i)$, $\Bar{\sigma}=n^{-1}\sum_{i=1}^n\sigma_i$. As a result,  for each feature $X_i$ within $\mathcal{D}$, we can evaluate $\partial\ell_L\left({E};\mathcal{D}\right)/\partial\mu_i$ and $\partial\ell_L\left({E};\mathcal{D}\right)/\partial\sigma_i$ via the chain rule.  For instance,
\begin{align*}
\frac{\partial\ell_L\left({E};\mathcal{D}\right)}{\partial\mu_i}=-2\Bar{\sigma}\frac{\partial{Q}_{1-\alpha}^s(\{E_j\}_{j=1}^n)}{\partial E_i}\frac{\text{sign}(Y_i-\mu_i)}{\sigma_i}.
\end{align*}

\subsection{Empirical results on real data}
We apply our boosted conformal procedure to the same datasets described in Section \ref{sec:emp.res}. Detailed information on the models and hyperparameters used during the training and boosting stages can be found in Section \ref{sec:model.hyperparams}. 
Partial experiment results are summarized in Table \ref{tab:table1}.
Notably, the boosting performance highlighted in bold exhibits significant improvement compared to previously documented results \cite{romano2019conformalized,Sesia_2020}.  We see a pronounced enhancement with the blog dataset; before boosting, the Local prediction intervals are on average $42\%$ longer than those generated by CQR. After boosting, these intervals outperform the boosted CQR intervals by $32\%$. Using CQR as the baseline also yields substantial improvements, a decrease in averaged length exceeding $10\%$ in six out of the eleven datasets. The meps-21 dataset, in particular, shows an improvement of up to $18\%$ relative to the baseline. Results on the remaining datasets can be found in Tables \ref{tab:len.local.full} and \ref{tab:len.cqr.full}. Figure \ref{fig:len.real} compares the conformalized prediction intervals derived from baseline Local and CQR  scores with those obtained from the boosted scores. To effectively visualize the impact of boosting, we conduct a regression tree analysis on the training set {to predict the label $Y$}, setting the maximum number of tree nodes to four. This regression tree is then applied to the test set, allowing for a detailed comparison of the prediction intervals across each of the four distinct leaves. 

\begin{figure}[t]
    \centering
    \includegraphics[trim={1cm 0.5cm 0.5cm 1cm},clip,width=0.8\linewidth]{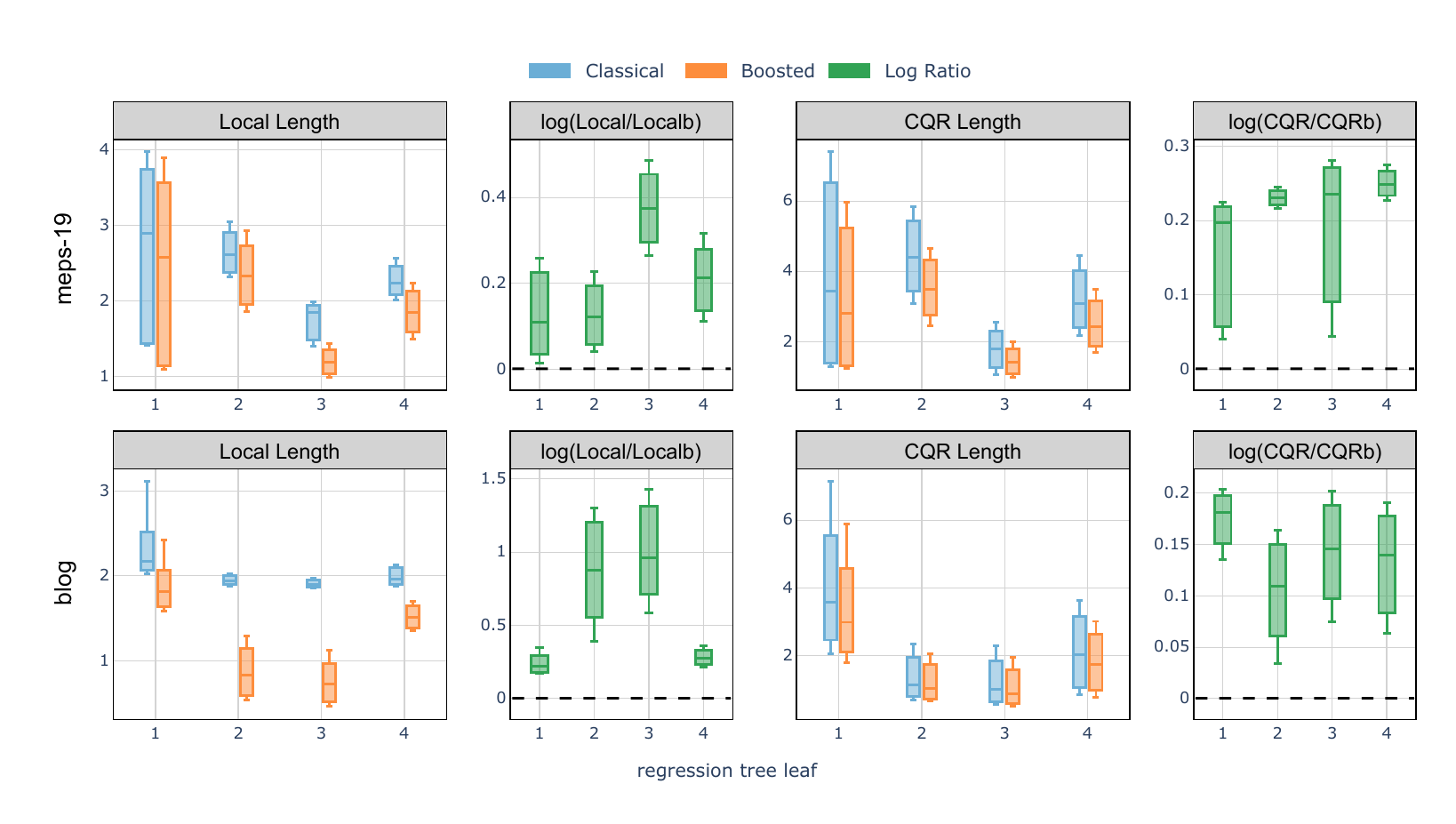}
     \caption{
    Comparison of test set average interval length {evaluated on the meps-19 and blog datasets}: classical Local and CQR conformal procedure versus the boosted procedures (abbreviated as `Localb' and `CQRb') compared in each of the 4 leaves of a regression tree trained on the training set {to predict the label $Y$}. A positive log ratio value between the regular and boosted interval lengths indicates improvement from boosting. The target miscoverage rate is set at $\alpha=10\%$. 
    }\label{fig:len.real}
    \vspace{-0.4cm}
\end{figure}
\begin{table}[htb]
  \begin{center}
    \caption{Test set {average interval length} $\ell_L$ evaluated on various conformalized prediction intervals. The best result achieved for each dataset is highlighted in bold. 
    }
    \label{tab:table1}
    \vspace{0.3cm}
    {\footnotesize
    \begin{tabular}{S S S S S S S}
    \hline
      \multicolumn{7}{c}{Average Length, target miscoverage $\alpha = 10\%$}\\
      \hline
      \multicolumn{1}{c }{\multirow{2}{*}{Dataset}}& \multicolumn{2}{c}{Method} &  \multicolumn{1}{c}{\multirow{2}{*}{Improvement}}& \multicolumn{2}{c}{Method} &  \multicolumn{1}{c}{\multirow{2}{*}{Improvement}}\\ 
      \multicolumn{1}{c}{} & \multicolumn{1}{c}{Local} & $\text{Boosted}$&\multicolumn{1}{c}{} & \multicolumn{1}{c}{CQR} & $\text{Boosted}$&\multicolumn{1}{c}{}\\ 
      \hline
        blog & 2.056256 & \textbf{0.972} & $\text{-52.74}\%$&1.444662 &{1.434}&\text{-0.71}\%\\
        $\text{facebook-1}$ & 1.896465 & \text{1.383}& $\text{-27.03}\%$&1.198184&\textbf{1.072}&{-10.47}\%\\
      $\text{facebook-2}$ & 1.854089 & \text{1.363} & $\text{-26.51}\%$&1.199786&\textbf{1.075}&{-10.41}\%\\
      $\text{meps-19}$ & 2.070403 & \textbf{1.685}& $\text{-18.60}\%$&2.553576& \text{2.136}& $\text{-16.35}\%$\\
      $\text{meps-20}$ & 2.081153 & \textbf{1.836}& $\text{-11.80}\%$&2.667130& \text{2.357}& $\text{-11.62}\%$\\
      $\text{meps-21}$ &2.062927  &\textbf{1.795}  & $\text{-12.99}\%$&2.584983& \text{2.105}& $\text{-18.55}\%$\\
            
      \hline
    \end{tabular}}
  \end{center}
  \vspace{-0.45cm}
\end{table}

\section{Discussion}\label{sec:discussion}
We introduced a post-training conformity score boosting scheme aiming to optimize for conditional coverage or {length} of the conformalized prediction interval. An intriguing avenue for future exploration involves simultaneously optimizing both {length} and conditional coverage, {potentially trading off these objectives by incorporating user-specified weights \cite{dabah2024temperature}}.  {Additionally, we can readily adapt our procedure to meet various application-specific objectives. For instance, we can optimize for conditional coverage on predefined feature groups, a common task in enhancing fairness in distributing social resources across different demographic groups \cite{gibbs2023conformal}. Similarly, we can modify our procedure to reduce the length of prediction intervals for predefined label groups, which can be seen as reallocating resources to decrease uncertainty for certain groups at the expense of higher uncertainty for other groups \cite{stutz2022learning}.} Candidate loss functions tailored to these objectives are detailed in Section \ref{sec.app: loss.gen}. Lastly, the primary emphasis of this paper centers on the design of the conformity score boosting scheme and formalizing the optimization of conditional coverage in mathematical terms, leaving room for computational optimization to enhance performance and runtime efficiency. {In essence, the gradient boosting algorithm in our procedure can be replaced with any gradient-based machine learning model. Thus, another interesting future direction would be to explore whether alternative algorithms could enhance performance.}
 
\vspace{-0.1cm}
\section*{Acknowlegements}
E.C. was supported by the Office of Naval Research under Grant No. N00014-24-1-2305, the National Science Foundation under Grant No. DMS2032014, and the Simons Foundation under Award 814641. R.F.B. was supported by the Office of Naval Research via grant N00014-20-1-2337, and by the National
Science Foundation via grant DMS-2023109.

\clearpage

\printbibliography

@misc{misc_meps19,
  title        = {{Medical expenditure panel survey, panel 19}},
url={https://meps.ahrq.gov/mepsweb/data_stats/download_data_files_detail.jsp?cboPufNumber=HC-181}
}

@article{Higgins2022,
  title={Structure of the malaria vaccine candidate Pfs48/45 and its recognition by transmission blocking antibodies},
  author={Ko, Kuang-Ting and Lennartz, Frank and Mekhaiel, David and Guloglu, Bora and Marini, Arianna and Deuker, Danielle J and Long, Carole A and Jore, Matthijs M and Miura, Kazutoyo and Biswas, Sumi and others},
  journal={Nature Communications},
  volume={13},
  number={1},
  pages={5603},
  year={2022},
  publisher={Nature Publishing Group UK London}
}

@article{Esteva2017,
  title={Dermatologist-level classification of skin cancer with deep neural networks},
  author={Esteva, Andre and Kuprel, Brett and Novoa, Roberto A and Ko, Justin and Swetter, Susan M and Blau, Helen M and Thrun, Sebastian},
  journal={nature},
  volume={542},
  number={7639},
  pages={115--118},
  year={2017},
  publisher={Nature Publishing Group}
}

@book{vovk2005algorithmic,
  title={Algorithmic learning in a random world},
  author={Vovk, Vladimir and Gammerman, Alexander and Shafer, Glenn},
  volume={29},
  year={2005},
  publisher={Springer}
}

@inproceedings{chen2016xgboost,
  title={Xgboost: A scalable tree boosting system},
  author={Chen, Tianqi and Guestrin, Carlos},
  booktitle={Proceedings of the 22nd acm sigkdd international conference on knowledge discovery and data mining},
  pages={785--794},
  year={2016}
}

@article{ke2017lightgbm,
  title={Lightgbm: A highly efficient gradient boosting decision tree},
  author={Ke, Guolin and Meng, Qi and Finley, Thomas and Wang, Taifeng and Chen, Wei and Ma, Weidong and Ye, Qiwei and Liu, Tie-Yan},
  journal={Advances in neural information processing systems},
  volume={30},
  year={2017}
}

@article{alphafold,
  title={Highly accurate protein structure prediction with AlphaFold},
  author={Jumper, John and Evans, Richard and Pritzel, Alexander and Green, Tim and Figurnov, Michael and Ronneberger, Olaf and Tunyasuvunakool, Kathryn and Bates, Russ and {\v{Z}}{\'\i}dek, Augustin and Potapenko, Anna and others},
  journal={Nature},
  volume={596},
  number={7873},
  pages={583--589},
  year={2021},
  publisher={Nature Publishing Group}
}

@misc{misc_meps20,
  title        = {{Medical expenditure panel survey, panel 20}},
url={https://meps.ahrq.gov/mepsweb/data_stats/download_data_files_detail.jsp?cboPufNumber=HC-181}
}

@misc{misc_meps21,
  title        = {{Medical expenditure panel survey, panel 21}},
url={https://meps.ahrq.gov/mepsweb/data_stats/download_data_files_detail.jsp?cboPufNumber=HC-192}
}

@misc{misc_star,
  author       = { Achilles, C.M. and  Bain,Helen and Bellott,Fred and Boyd-Zaharias,Jayne  and Finn,Jeremy and Folger,John and  Johnston, John and Word,Elizabeth},
  title        = {{Tennessee’s student teacher achievement ratio (STAR) project}},
  year         = {2008}
}

@misc{misc_facebook_comment_volume_dataset_363,
  author       = {Singh,Kamaljot},
  title        = {{Facebook Comment Volume Dataset}},
  year         = {2016},
  howpublished = {UCI Machine Learning Repository},
  note         = {{DOI}: https://doi.org/10.24432/C5Q886}
}

@misc{misc_concrete_compressive_strength_165,
  author       = {Yeh,I-Cheng},
  title        = {{Concrete Compressive Strength}},
  year         = {2007},
  howpublished = {UCI Machine Learning Repository},
  note         = {{DOI}: https://doi.org/10.24432/C5PK67}
}

@misc{misc_communities_and_crime_183,
  author       = {Redmond,Michael},
  title        = {{Communities and Crime}},
  year         = {2009},
  howpublished = {UCI Machine Learning Repository},
  note         = {{DOI}: https://doi.org/10.24432/C53W3X}
}

@misc{misc_blogfeedback_304,
  author       = {Buza,Krisztian},
  title        = {{BlogFeedback}},
  year         = {2014},
  howpublished = {UCI Machine Learning Repository},
  note         = {{DOI}: https://doi.org/10.24432/C58S3F}
}

@inproceedings{vovk2012conditional,
  title={Conditional validity of inductive conformal predictors},
  author={Vovk, Vladimir},
  booktitle={Asian conference on machine learning},
  pages={475--490},
  year={2012},
  organization={PMLR}
}

@misc{misc_physicochemical_properties_of_protein_tertiary_structure_265,
  author       = {Rana,Prashant},
  title        = {{Physicochemical Properties of Protein Tertiary Structure}},
  year         = {2013},
  howpublished = {UCI Machine Learning Repository},
  note         = {{DOI}: https://doi.org/10.24432/C5QW3H}
}

@misc{misc_bike_sharing_dataset_275,
  author       = {Fanaee-T,Hadi},
  title        = {{Bike Sharing Dataset}},
  year         = {2013},
  howpublished = {UCI Machine Learning Repository},
  note         = {{DOI}: https://doi.org/10.24432/C5W894}
}

@article{barber2020,
    author = {Foygel Barber, Rina and Candès, Emmanuel J and Ramdas, Aaditya and Tibshirani, Ryan J},
    title = "{The limits of distribution-free conditional predictive inference}",
    journal = {Information and Inference: A Journal of the IMA},
    volume = {10},
    number = {2},
    pages = {455-482},
    year = {2020},
    month = {08},
    abstract = "{We consider the problem of distribution-free predictive inference, with the goal of producing predictive coverage guarantees that hold conditionally rather than marginally. Existing methods such as conformal prediction offer marginal coverage guarantees, where predictive coverage holds on average over all possible test points, but this is not sufficient for many practical applications where we would like to know that our predictions are valid for a given individual, not merely on average over a population. On the other hand, exact conditional inference guarantees are known to be impossible without imposing assumptions on the underlying distribution. In this work, we aim to explore the space in between these two and examine what types of relaxations of the conditional coverage property would alleviate some of the practical concerns with marginal coverage guarantees while still being possible to achieve in a distribution-free setting.}",
    issn = {2049-8772},
    doi = {10.1093/imaiai/iaaa017},
    url = {https://doi.org/10.1093/imaiai/iaaa017},
    eprint = {https://academic.oup.com/imaiai/article-pdf/10/2/455/38549621/iaaa017.pdf},
}

@inproceedings{
stutz2022learning,
title={Learning Optimal Conformal Classifiers},
author={David Stutz and Krishnamurthy Dj Dvijotham and Ali Taylan Cemgil and Arnaud Doucet},
booktitle={International Conference on Learning Representations},
year={2022},
url={https://openreview.net/forum?id=t8O-4LKFVx}
}

@article{lei2018,
author = {Jing Lei and Max G’Sell and Alessandro Rinaldo and Ryan J. Tibshirani and Larry Wasserman},
title = {Distribution-Free Predictive Inference for Regression},
journal = {Journal of the American Statistical Association},
volume = {113},
number = {523},
pages = {1094-1111},
year  = {2018},
publisher = {Taylor & Francis},
doi = {10.1080/01621459.2017.1307116},
URL = {https://doi.org/10.1080/01621459.2017.1307116   },
eprint = {https://doi.org/10.1080/01621459.2017.1307116    
}
}

@book{boyd2004convex,
  title={Convex optimization},
  author={Boyd, Stephen P and Vandenberghe, Lieven},
  year={2004},
  publisher={Cambridge university press}
}

@Inbook{Vovk2005,
title="Classification with conformal predictors",
bookTitle="Algorithmic Learning in a Random World",
year="2005",
publisher="Springer US",
address="Boston, MA",
pages="53--96",
isbn="978-0-387-25061-8",
doi="10.1007/0-387-25061-1_3",
url="https://doi.org/10.1007/0-387-25061-1_3"
}

@article{romano2019conformalized,
  title={Conformalized quantile regression},
  author={Romano, Yaniv and Patterson, Evan and Candes, Emmanuel},
  journal={Advances in neural information processing systems},
  volume={32},
  year={2019}
}

@article{Sesia_2020,
  title={A comparison of some conformal quantile regression methods},
  author={Sesia, Matteo and Cand{\`e}s, Emmanuel J},
  journal={Stat},
  volume={9},
  number={1},
  pages={e261},
  year={2020},
  publisher={Wiley Online Library}
}

@inbook{romano2019,
author = {Romano, Yaniv and Patterson, Evan and Cand\`{e}s, Emmanuel J.},
title = {Conformalized Quantile Regression},
year = {2019},
publisher = {Curran Associates Inc.},
address = {Red Hook, NY, USA},
booktitle = {Proceedings of the 33rd International Conference on Neural Information Processing Systems},
articleno = {318},
numpages = {11}
}

@article{Friedman2020,
	author = {Jerome H. Friedman},
	doi = {10.1073/pnas.1921562117},
	eprint = {https://www.pnas.org/doi/pdf/10.1073/pnas.1921562117},
	journal = {Proceedings of the National Academy of Sciences},
	number = {35},
	pages = {21175-21184},
	title = {Contrast trees and distribution boosting},
	url = {https://www.pnas.org/doi/abs/10.1073/pnas.1921562117},
	volume = {117},
	year = {2020},
	bdsk-url-1 = {https://www.pnas.org/doi/abs/10.1073/pnas.1921562117},
	bdsk-url-2 = {https://doi.org/10.1073/pnas.1921562117}}

@article{HarrellDavis,
 ISSN = {00063444},
 URL = {http://www.jstor.org/stable/2335999},
 abstract = {A new distribution-free estimator Qp of the pth population quantile is formulated, where Qp is a linear combination of order statistics admitting a jackknife variance estimator having excellent properties. The small sample efficiency of Qp is studied under a variety of light and heavy-tailed symmetric and asymmetric distributions. For the distributions and values of p studied, Qp is generally substantially more efficient than the traditional estimator based on one or two order statistics.},
 author = {Frank E. Harrell and C. E. Davis},
 journal = {Biometrika},
 number = {3},
 pages = {635--640},
 publisher = {[Oxford University Press, Biometrika Trust]},
 title = {A New Distribution-Free Quantile Estimator},
 volume = {69},
 year = {1982}
}

@inproceedings{pmlr-v119-blondel20a,
  title={Fast differentiable sorting and ranking},
  author={Blondel, Mathieu and Teboul, Olivier and Berthet, Quentin and Djolonga, Josip},
  booktitle={International Conference on Machine Learning},
  pages={950--959},
  year={2020},
  organization={PMLR}
}

@misc{huang2023uncertainty,
      title={Uncertainty Quantification over Graph with Conformalized Graph Neural Networks}, 
      author={Kexin Huang and Ying Jin and Emmanuel Candès and Jure Leskovec},
      year={2023},
      eprint={2305.14535},
      archivePrefix={arXiv},
      primaryClass={cs.LG}
}

@misc{dabah2024temperature,
      title={On Temperature Scaling and Conformal Prediction of Deep Classifiers}, 
      author={Lahav Dabah and Tom Tirer},
      year={2024},
      eprint={2402.05806},
      archivePrefix={arXiv},
      primaryClass={cs.LG},
      url={https://arxiv.org/abs/2402.05806}, 
}

@misc{gibbs2023conformal,
      title={Conformal Prediction With Conditional Guarantees}, 
      author={Isaac Gibbs and John J. Cherian and Emmanuel J. Candès},
      year={2023},
      eprint={2305.12616},
      archivePrefix={arXiv},
      primaryClass={stat.ME}
}

@misc{kivaranovic2020adaptive,
      title={Adaptive, Distribution-Free Prediction Intervals for Deep Networks}, 
      author={Danijel Kivaranovic and Kory D. Johnson and Hannes Leeb},
      year={2020},
      eprint={1905.10634},
      archivePrefix={arXiv},
      primaryClass={stat.ML}
}

\newpage

\appendix
\counterwithin{figure}{section}
\counterwithin{table}{section}
\renewcommand\thefigure{\thesection\arabic{figure}}
\renewcommand\thetable{\thesection\arabic{table}}
\section{Appendix}
\subsection{Candidate loss functions for additional application-specific objectives}\label{sec.app: loss.gen}

\textbf{Conditional coverage on predefined feature groups}: this task can be viewed as a specialized application within our broader strategy of boosting for conditional coverage, as detailed in Section \ref{sec:boost.cond.cov}. There, the primary challenge was to develop a loss function that accurately measures deviations from the target conditional coverage rate. We achieved this by using contrast trees to identify partitions in the feature space that maximize these deviations, effectively identifying subgroups in need of protection. This process is simplified when the partitions correspond to prespecified groups, allowing us to continue using the empirical maximum deviation as a candidate loss function.

Consider a dataset $\mathcal{D}=\{(X_i,Y_i)\}_{i=1}^n$ and a score function ${E}$. Denote by ${C_n(}\cdot)$ the conformalized prediction interval constructed from ${E}$. Let $G_1,\ldots,G_M$ be prespecified feature index groups. Within each set $\mathcal{D}_i=\{(X_j,Y_j)\}_{j\in G_i}$, compute the absolute deviation $d_i$ as
\begin{align}
d_i\left({C_n(}\cdot);\mathcal{D}_i\right)=\left| \frac{1}{|G_i|}{\textstyle\sum}_{j\in G_i}\mathbbm{1}(Y_j\in {C_n(}X_j))- (1-\alpha) \right|.
\end{align}
The overall empirical maximum deviation is then defined as
\begin{align}
\ell\left({E};\mathcal{D}\right) = {\textstyle\max}_{1\leq i\leq M}{d_i\left({C_n(}\cdot);\mathcal{D}_i\right)}.
\end{align}

{\textbf{Interval length conditional on predefined label groups}: for a dataset $\mathcal{D}=\{(X_i,Y_i)\}_{i=1}^n$ and a score function ${E}$, let $\mathcal{Y}_1,\ldots,\mathcal{Y}_M$ be the prespecified label groups. A natural minimization objective for balancing uncertainty among these groups is defined as:
\begin{align*}
    \ell({E};\mathcal{D})=\sum_{i=1}^Mw_i\frac{1}{\sum_{j=1}^n\mathbbm{1}(Y_j\in \mathcal{Y}_i)}\sum_{j=1}^n\mathbbm{1}(Y_j\in \mathcal{Y}_i)|{C_n(}X_j)|,
\end{align*}
where $(w_1,\ldots,w_M)$ represents a set of user-specified weights.}

\subsection{Proof of {Proposition} \ref{thm:uni.condcov}}\label{sec:pf.thm}
Our proof relies on the following lemma.
\begin{lemma}[Expressiveness]\label{lem:uni}
Given any sample pair $X$ and $Y$ with a continuous joint probability density distribution, and a prediction interval $[c_l(\cdot),c_u(\cdot)]$ with marginal coverage equal to $1-\alpha$, there exist specific function sets: ($\mu(\cdot)$,$\sigma(\cdot)$) for the Local type, and ($\mu_1(\cdot)$,$\mu_2(\cdot)$,$\Tilde{\sigma}(\cdot)$) for the CQR type, such that asymptotically:
\begin{enumerate}
\item The conformalized prediction interval (\ref{eq:conf.local.b}), derived using the generalized Local type conformity score $f_{\mu,\sigma}$, accurately recovers $[c_l(\cdot),c_u(\cdot)]$.
\item Similarly, the conformalized prediction interval (\ref{eq:conf.cqr.gen}), based on the generalized CQR type conformity score $E_{\mu_1,\mu_2,\Tilde{\sigma}}$, also recovers $[c_l(\cdot),c_u(\cdot)]$.
\end{enumerate}
\end{lemma}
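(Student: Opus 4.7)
The plan is to produce explicit choices of the function parameters in each family so that the population $(1-\alpha)$-quantile of the resulting conformity score lands at a value which makes the conformalized interval equal $[c_l(\cdot), c_u(\cdot)]$, and then appeal to empirical-quantile convergence. For the generalized Local family $\mathcal{F}$, take $\mu(x) = (c_l(x) + c_u(x))/2$ and $\sigma(x) = (c_u(x) - c_l(x))/2$, which is positive off a null set since $\mathbbm{P}(Y \in [c_l(X), c_u(X)]) = 1-\alpha > 0$ together with the continuous joint density rules out $c_u = c_l$ on a positive-measure set. With these choices, $E(X,Y) = |Y - \mu(X)|/\sigma(X) \le 1$ iff $Y \in [c_l(X), c_u(X)]$, so $\mathbbm{P}(E(X,Y) \le 1) = 1-\alpha$. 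Plugging $Q = 1$ into (\ref{eq:conf.local.b}) recovers $[c_l(X), c_u(X)]$ exactly.

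\textbf{CQR construction and asymptotics.} For the generalized CQR family $\mathcal{H}$, take $\mu_1(x) = c_l(x)$, $\mu_2(x) = c_u(x)$, and $\sigma(x) \equiv 1$; these satisfy the constraints $\mu_1 \le \mu_2$ and $\sigma > 0$. Then $E(X,Y) = \max\{c_l(X) - Y, Y - c_u(X)\} \le 0$ iff $Y \in [c_l(X), c_u(X)]$, giving $\mathbbm{P}(E \le 0) = 1-\alpha$. Substituting $Q = 0$ into (\ref{eq:conf.cqr.gen}) again yields $[c_l(X), c_u(X)]$. To upgrade both constructions to an asymptotic statement for the actual split conformal procedure, I would use that the continuous joint density forces the distribution of $E$ to have no atom at the relevant quantile value, so that the population $(1-\alpha)$-quantile is unique and equals $1$ (Local) or $0$ (CQR). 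Standard empirical-quantile convergence (Glivenko--Cantelli combined with continuity of the distribution function at the quantile) then gives $Q_{1-\alpha}(E, I_2) \to 1$ or $0$ almost surely as $|I_2| \to \infty$, which transfers directly into the recovery of $[c_l, c_u]$ by plugging into the interval formulas.

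\textbf{Main obstacle.} The algebra is immediate; the delicate step is verifying the no-atom condition on the score distribution at the target quantile. This reduces to showing that $\{(x,y) : Y = c_l(X)\} \cup \{(x,y) : Y = c_u(X)\}$ is null under the joint density, which follows from the continuous density hypothesis since the graphs $\{y = c_l(x)\}$ and $\{y = c_u(x)\}$ are lower-dimensional submanifolds (formally, this uses a Fubini argument on the density together with the fact that a one-dimensional slice of a continuous density assigns zero mass to any single point). A secondary edge case is degeneracy of $\sigma$ in the Local construction on any set where $c_u(x) = c_l(x)$; this set has $X$-probability zero by the coverage and continuity assumptions, and one can handle it rigorously either by setting $\sigma = 1$ on that null set (the conformity score being well defined up to a null set) or by an $\epsilon$-perturbation argument taking $\sigma(x) = \max\{(c_u(x) - c_l(x))/2, \epsilon\}$ and letting $\epsilon \downarrow 0$.
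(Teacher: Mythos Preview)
Your proof is correct. The Local-type construction, $\mu=(c_l+c_u)/2$ and $\sigma=(c_u-c_l)/2$, is exactly what the paper uses. For the CQR case, however, your route differs from the paper's: you take $(\mu_1,\mu_2,\tilde\sigma)=(c_l,c_u,1)$ so that the target population quantile is $0$, whereas the paper sets $\mu_1=\mu_2=(c_l+c_u)/2$ and $\tilde\sigma=(c_u-c_l)/2$, effectively collapsing the CQR score back to the Local score and landing on population quantile $1$. Your construction is arguably cleaner (it avoids the $\sigma>0$ degeneracy issue altogether for the CQR family and uses the constraint $\mu_1\le\mu_2$ in a natural way), while the paper's choice has the minor aesthetic advantage of treating both families with a single score and a single quantile value. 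You also go further than the paper in spelling out the asymptotic step (Glivenko--Cantelli plus no atom at the quantile) and in handling the null set where $c_u=c_l$; the paper simply invokes the population quantile directly without discussing these points.
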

\begin{proof}[Proof of Lemma \ref{lem:uni}]
Recall that the generalized Local score (\ref{eq:local.gen}) characterized by $(\mu,\sigma)$ takes the form
\begin{align}
    E_{\mu,\sigma}(x,y)=\frac{|y-\mu{(x)}|}{\sigma(x)}.
\end{align}
Asymptotically, the conformalized prediction interval is given by
\begin{align}
    \left[{\mu}(X)-{Q}_{1-\alpha}(E_{\mu,\sigma}){\sigma}(X), {\mu}(X)+{Q}_{1-\alpha}(E_{\mu,\sigma}){\sigma}(X)\right].
\end{align}
Here, ${Q}_{1-\alpha}$ represents the population quantile.
Set
\begin{align*}
    \mu(x)&=\frac{c_{l}(x)+c_{u}(x)}{2},\quad
    \sigma(x)=\frac{c_{u}(x)-c_{l}(x)}{2}.
\end{align*}
By assumption, we have
\begin{align*}
    \mathbbm{P}(Y\in [c_{l}(X),c_{u}(X)])= 1-\alpha.
\end{align*}
With a simple change of variables, the above inequality is equivalent to
\begin{align*}
     \mathbbm{P}\left(\left|\frac{Y-\mu(X)}{\sigma(X)}\right|\leq 1\right)= 1-\alpha.
\end{align*}
In other words, this is equivalent to
\begin{align*}
    Q_{1-\alpha}(E_{\mu,\sigma})= 1.
\end{align*}
We have thus proved the result for the generalized Local type conformity
score $E_{\mu,\sigma}$. In the same spirit, we can prove the result for the generalized CQR type conformity score $E_{\mu_1,\mu_2,\Tilde{\sigma}}$ by taking 
\begin{align*}
    \mu_1=c_l+\frac{c_{u}(x)-c_{l}(x)}{2},\quad \mu_2=c_u-\frac{c_{u}(x)-c_{l}(x)}{2},\quad \Tilde{\sigma}&=\frac{c_{u}(x)-c_{l}(x)}{2}.
\end{align*}
Recall that a generalized CQR score function (\ref{eq:cqr.gen}) characterized by ($\mu_1$, $\mu_2$, $\sigma$) is defined as:
\begin{align}
E_{\mu_1,\mu_2,{\sigma}}(x,y)=\max\left\{{\mu_1}(x)-y,y-\mu_2(x)\right\}/{\sigma}(x),
\end{align}
which leads to the asymptotic conformalized prediction intervals of the form 
\begin{align}
 [{\mu_1}(X)&-{\sigma}(X){Q}_{1-\alpha}(E_{\mu_1,\mu_2,{\sigma}}),{\mu_2}(X)+{\sigma}(X){Q}_{1-\alpha}(E_{\mu_1,\mu_2,{\sigma}})].
\end{align}
Plugging in $\mu_1$, $\mu_2$, $\Tilde{\sigma}$ defined above, we immediately have 
\begin{align*}
    &\mathbbm{P}(Y\in [c_{l}(X),c_{u}(X)])= 1-\alpha\\
    \Longleftrightarrow\hspace{0.5em} & \mathbbm{P}(c_l(X)-Y\leq 0, Y-c_u(X)\leq 0)= 1-\alpha\\
    \Longleftrightarrow\hspace{0.5em}  & \mathbbm{P}\left(\frac{c_l(X)-Y+\Tilde{\sigma}(X)}{\Tilde{\sigma}(X)}\leq 1, \frac{Y-c_u(X)+\Tilde{\sigma}(X)}{\Tilde{\sigma}(X)}\leq 1\right)= 1-\alpha\\
    \Longleftrightarrow\hspace{0.5em}  &Q_{1-\alpha}(E_{\mu_1,\mu_2,{\sigma}}(x,y))=1.
\end{align*}
\end{proof}
\begin{proof}[Proof of {Proposition} \ref{thm:uni.condcov}]
    It suffices to take $c_l(x)=Q_{\alpha/2}(Y|X=x)$, $c_u(x)=Q_{1-\alpha/2}(Y|X=x)$ and apply Lemma \ref{lem:uni}.
\end{proof}

\subsection{Boosting for length: theoretical guarantees}\label{sec:thm.power}
Similar to {Proposition} \ref{thm:uni.condcov}, we show in {Proposition} \ref{thm:uni.eff} below that the generalized Local and CQR score families exhibit the necessary expressiveness to contain the oracle score, achieving optimal length while ensuring valid marginal coverage. 
\begin{prop}[Asymptotic expressiveness]\label{thm:uni.eff}
    Under the assumptions of {Proposition} \ref{thm:uni.condcov}, for any target coverage rate $1-\alpha$, as $n\rightarrow \infty$, the following statements hold true:
    \begin{enumerate}
        \item There exists $(\mu^*,\sigma^*)$ such that the corresponding generalized Local score function (\ref{eq:local.gen}) recovers the {shortest} oracle prediction interval (\ref{eq: oracle.len}).
        \item There exists $(\mu_1^*,\mu_2^*,\sigma^*)$ such that the corresponding generalized CQR score function (\ref{eq:cqr.gen}) recovers the {shortest} oracle prediction interval (\ref{eq: oracle.len}).
    \end{enumerate}
\end{prop}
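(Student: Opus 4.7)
The plan is to show that this proposition follows almost immediately from Lemma \ref{lem:uni}, which was established in the proof of Proposition \ref{thm:uni.condcov}. Lemma \ref{lem:uni} says that for any prediction interval $[c_l(\cdot),c_u(\cdot)]$ achieving marginal coverage exactly $1-\alpha$, there exist $(\mu,\sigma)$ and $(\mu_1,\mu_2,\tilde\sigma)$ whose associated generalized Local and CQR conformalized intervals asymptotically coincide with $[c_l(\cdot),c_u(\cdot)]$. So the only thing left is to verify that the oracle shortest interval $\text{HDR}_\alpha(Y\mid X=\cdot)$ is a valid choice for $[c_l,c_u]$.

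First, I would set $c_l(x)=Q_{l(\alpha)}(Y\mid X=x)$ and $c_u(x)=Q_{u(\alpha)}(Y\mid X=x)$, i.e.\ the endpoints of the conditional HDR from (\ref{eq: oracle.len}). Under the continuous joint density assumption inherited from Proposition \ref{thm:uni.condcov}, the conditional distribution of $Y$ given $X=x$ is continuous for almost every $x$, so the HDR boundaries are uniquely defined and $\mathbbm{P}(Y\in [c_l(X),c_u(X)]\mid X=x)=1-\alpha$ for almost every $x$. Taking expectation over $X$ yields marginal coverage exactly $1-\alpha$, which is precisely the hypothesis required by Lemma \ref{lem:uni}.

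Second, I would invoke Lemma \ref{lem:uni} to obtain the desired $(\mu^*,\sigma^*)$ and $(\mu_1^*,\mu_2^*,\sigma^*)$. Explicitly, following the construction in that lemma's proof, I would take
\begin{align*}
\mu^*(x)=\tfrac{1}{2}\bigl(Q_{l(\alpha)}(Y\mid X=x)+Q_{u(\alpha)}(Y\mid X=x)\bigr),\quad \sigma^*(x)=\tfrac{1}{2}\bigl(Q_{u(\alpha)}(Y\mid X=x)-Q_{l(\alpha)}(Y\mid X=x)\bigr),
\end{align*}
for the Local case, and the analogous triple $(\mu_1^*,\mu_2^*,\sigma^*)$ for the CQR case. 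The lemma then guarantees $Q_{1-\alpha}=1$ asymptotically for the resulting score, so the conformalized interval coincides with $[Q_{l(\alpha)}(Y\mid X=\cdot),Q_{u(\alpha)}(Y\mid X=\cdot)]$, which is the oracle shortest interval.

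I do not anticipate any serious obstacle: the entire argument is a reduction to Lemma \ref{lem:uni} once one observes that the HDR is a marginal-coverage-exact interval. The only mild subtlety is verifying that the HDR has marginal coverage \emph{exactly} $1-\alpha$ (rather than the $\geq 1-\alpha$ inequality in its definition), which is automatic under the continuous density assumption because each conditional quantile is attained exactly. A brief remark addressing this continuity point, and a pointer back to the explicit choices of $(\mu^*,\sigma^*)$ and $(\mu_1^*,\mu_2^*,\sigma^*)$ from the proof of Lemma \ref{lem:uni}, should suffice.
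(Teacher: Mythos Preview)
Your proposal is correct and matches the paper's proof essentially verbatim: the paper simply takes $c_l(x)=Q_{l(\alpha)}(Y\mid X=x)$, $c_u(x)=Q_{u(\alpha)}(Y\mid X=x)$ and invokes Lemma~\ref{lem:uni}. Your additional remarks on why continuity yields exact marginal coverage and the explicit formulas for $(\mu^*,\sigma^*)$ are helpful elaborations but not required.
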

\begin{proof}[Proof of {Proposition} \ref{thm:uni.eff}]
    It suffices to take $c_l(x)=Q_{l(\alpha)}(Y|X=x)$, $c_u(x)=Q_{u(\alpha)}(Y|X=x)$ and apply Lemma \ref{lem:uni}, where $Q_{l(\alpha)}$ and $Q_{u(\alpha)}$ are the lower and upper limits of the High Density Region defined in (\ref{eq: oracle.len}).
\end{proof}

\subsection{CQR type conformity score boosting} \label{app.sec:cqr.to.local}
A generalized CQR score function (\ref{eq:cqr.gen}) is uniquely defined by a triple ($\mu_1(\cdot),\mu_2(\cdot),\sigma(\cdot)$). We will show how searching for a generalized CQR score can be reduced to searching for a Local generalized score. To begin with, we shall say that score functions are equivalent if they recover identical conformalized prediction intervals.
\begin{definition}\label{def.equiv}
Let $\{X_i,Y_i\}_{i=1}^n$, $(X_{n+1},Y_{n+1})$ be i.i.d. with continuous joint probability density distribution, and let $[n]$ be partitioned into a training set $I_1$ and a calibration set $I_2$. Consider two conformity score functions,  $E_1$ and $E_2$, which produce conformalized prediction intervals $C_1(\cdot)$ and $C_2(\cdot)$, respectively. For any target coverage rate $1-\alpha$, $E_1$ and $E_2$ are equivalent if $C_1(\cdot)=C_2(\cdot)$ when marginal coverage rates $\mathbbm{P}(Y_{n+1}\in C_1(X_{n+1}))$ and $\mathbbm{P}(Y_{n+1}\in C_2(X_{n+1}))$ match.
\end{definition}
Building on this definition, we are now equipped to establish the following equivalences:
\begin{lemma}\label{lemma: equiv}
Under the assumptions of  Definition \ref{def.equiv}, the following statements hold:
\begin{enumerate}
    \item For the CQR-r score function defined in Section \ref{subsec:prelim}, there is an equivalent generalized Local score function characterized by a pair $(\mu(\cdot),\sigma(\cdot))$, where $\mu=(\mu_1+\mu_2)/2$, $\sigma=(\mu_2-\mu_1)/2$.
    \item For any generalized Local score function characterized by the pair $(\mu(\cdot),\sigma(\cdot))$, there is an equivalent generalized CQR score function characterized by a triple $(\mu(\cdot),\mu(\cdot),\sigma(\cdot))$.
\end{enumerate}
\end{lemma}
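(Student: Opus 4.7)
The plan is to verify both claims by direct algebraic computation, exhibiting one score as a strictly increasing affine transformation of the other. Because Definition \ref{def.equiv} only requires that the prediction intervals coincide when the marginal coverage rates match, and both scores inherit the coverage guarantee (\ref{eq:conformal}) from the split conformal construction, it will suffice to show that the two intervals are identical at every target level $1-\alpha$.

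For part 1, I would first reparameterize: set $\mu = (\mu_1 + \mu_2)/2$ and $\sigma = (\mu_2 - \mu_1)/2$, so that $\mu_1 = \mu - \sigma$ and $\mu_2 = \mu + \sigma$. A short calculation gives
\begin{align*}
\max\{\mu_1 - y,\, y - \mu_2\} = \max\{-(y-\mu) - \sigma,\, (y-\mu) - \sigma\} = |y - \mu| - \sigma,
\end{align*}
while the CQR-r denominator equals $\mu_2 - \mu_1 = 2\sigma$. Consequently the CQR-r score equals $\tfrac{1}{2} E_{\mu,\sigma}(x,y) - \tfrac{1}{2}$, where $E_{\mu,\sigma}(x,y) = |y-\mu(x)|/\sigma(x)$ is the corresponding generalized Local score. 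Because this relation is a strictly increasing affine function evaluated pointwise, the empirical quantiles transform the same way: $Q_{1-\alpha}(E_{\text{CQR-r}}) = \tfrac{1}{2} Q_{1-\alpha}(E_{\mu,\sigma}) - \tfrac{1}{2}$. Substituting back into the CQR-type interval formula (\ref{eq:conf.cqr.gen}), with $\hat{\sigma}(x) = 2\sigma(x)$, the $\sigma$-terms arising from $\mu_1$, $\mu_2$ and the $-\tfrac{1}{2}$ shift cancel, leaving exactly the Local-type interval (\ref{eq:conf.local.b}).

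For part 2, the argument collapses to an identity: taking $\mu_1 = \mu_2 = \mu$ in (\ref{eq:cqr.gen}) yields $\max\{\mu - y,\, y - \mu\}/\sigma = |y - \mu|/\sigma$, so the generalized CQR score with triple $(\mu,\mu,\sigma)$ literally equals the Local score $E_{\mu,\sigma}$. Their empirical quantiles and conformalized intervals coincide term for term, and equivalence in the sense of Definition \ref{def.equiv} follows immediately.

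The main obstacle, modest as it is, amounts to careful bookkeeping in part 1: one must track how the factor $\tfrac{1}{2}$ and the additive shift $-\tfrac{1}{2}$ in the score rescale the empirical quantile, and then verify that these adjustments interact with the asymmetric endpoints $\mu_1, \mu_2$ of the CQR-r interval formula in just the right way to reproduce the symmetric Local interval centered at $\mu$. No stochastic argument is required beyond the invariance of empirical quantiles under strictly increasing transformations.
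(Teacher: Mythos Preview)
Your proposal is correct and follows essentially the same route as the paper. Both arguments reduce part~1 to the algebraic identity $\max\{\mu_1-y,\,y-\mu_2\}=|y-\mu|-\sigma$ (in your notation) and then match the two one-parameter interval families; you make the affine relation $E_{\text{CQR-r}}=\tfrac12 E_{\mu,\sigma}-\tfrac12$ explicit and push it through the quantile, whereas the paper rewrites the Local interval in CQR-r form and invokes monotonicity of the interval in its quantile parameter, but these are the same computation viewed from opposite ends. Part~2 is identical in both.
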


The proof of the above Lemma is deferred to Section \ref{app.sec:proof.equiv}. Leveraging these equivalences, we carry out the boosted conformal procedure as follows:  first, we initialize a triple $\mu_1^{(0)}=\hat{q}_{\alpha/2}$, $\mu_2^{(0)}=\hat{q}_{1-\alpha/2}$, $\sigma_1^{(0)}=\hat{q}_{1-\alpha/2}-\hat{q}_{\alpha/2}$, which characterizes the CQR-r score function. Next, we find an equivalent generalized Local score function characterized by a pair $(\mu^{(0)}, \sigma^{(0)})$ chosen according to Lemma \ref{lemma: equiv}.   
After $\tau$ boosting rounds, we obtain the boosted pair $(\mu^{(\tau)},\sigma^{(\tau)})$ and the corresponding score function. %
{Finally, we recover an equivalent generalized CQR score function
\begin{align*}
E^{(\tau)}(x,y) = \max\left\{{\mu_1^{(\tau)}}(x)-y,y-{\mu_2}^{(\tau)}(x)\right\}/{\sigma}_1^{(\tau)}(x),
\end{align*}
characterized by the triple $({\mu_1}^{(\tau)},{\mu_2}^{(\tau)},{\sigma_1}^{(\tau)})$ chosen according to Lemma \ref{lemma: equiv}. }

\subsection{Proof of Lemma \ref{lemma: equiv}}\label{app.sec:proof.equiv}
   Recall that the generalized Local score (\ref{eq:local.gen}) characterized by $(\mu,\sigma)$ takes the form
\begin{align}
    E_{\mu,\sigma}(x,y)=\frac{|y-\mu{(x)}|}{\sigma(x)}.
\end{align}
The conformalized prediction interval is given by
\begin{align}
    \left[{\mu}(X)-{Q}_{1-\alpha}(E_{\mu,\sigma},I_2){\sigma}(X), {\mu}(X)+{Q}_{1-\alpha}(E_{\mu,\sigma},I_2){\sigma}(X)\right].
\end{align}
A generalized CQR score function (\ref{eq:cqr.gen}) characterized by ($\mu_1$, $\mu_2$, $\sigma$) is defined as:
\begin{align*}
E_{\mu_1,\mu_2,{\sigma}}(x,y)=\max\left\{{\mu_1}(x)-y,y-\mu_2(x)\right\}/{\sigma}(x),
\end{align*}
which leads to conformalized prediction intervals of the form 
\begin{align*}
 [{\mu_1}(X)&-{\sigma}(X){Q}_{1-\alpha}(E_{\mu_1,\mu_2,{\sigma}},I_2),{\mu_2}(X)+{\sigma}(X){Q}_{1-\alpha}(E_{\mu_1,\mu_2,{\sigma}},I_2)].
\end{align*}
\begin{enumerate}
    \item Plugging in the triple $\mu_1(x)=\hat{q}_{\alpha/2}$, $\mu_2(x)=\hat{q}_{1-\alpha/2}$, $\sigma_1(x)=\hat{q}_{1-\alpha/2}-\hat{q}_{\alpha/2}$, which characterize the CQR-r score function, we have the conformalized prediction interval 
    \begin{align*}
 [\mu_1(X)-\sigma_1(X){Q}_{1-\alpha}(E_{\mu_1,\mu_2,\sigma_1},I_2),\mu_2(X)+\sigma_1(X){Q}_{1-\alpha}(E_{\mu_1,\mu_2,\sigma_1},I_2)].
\end{align*}
Set
    \begin{align*}
    \mu(X)= \frac{\mu_1(X)+\mu_2(X)}{2},\sigma(X)= \frac{\mu_2(X)-\mu_1(X)}{2},
\end{align*}
then the generalized Local conformity score $E_{\mu,\sigma}(x,y)={|y-\mu(x)|}/{\sigma(x)}$ recovers conformalized prediction intervals of the form
\begin{align*}
    &[\mu(X)-\sigma(X)Q(E_{\mu,\sigma},I_2),\mu(X)+\sigma(X)Q(E_{\mu,\sigma},I_2)]\\
=&\left[\frac{\mu_1(X)+\mu_2(X)}{2}-\frac{\mu_2(X)-\mu_1(X)}{2}Q(E_{\mu,\sigma},I_2), \right. \\
&\left.\quad\quad\quad\quad\quad\quad\quad\quad\quad\quad\quad\quad \frac{\mu_1(X)+\mu_2(X)}{2}+\frac{\mu_2(X)-\mu_1(X)}{2}Q(E_{\mu,\sigma},I_2)\right]\\
=&\left[\mu_2(X)-(\mu_2(X)-\mu_1(X))\frac{Q(E_{\mu,\sigma},I_2)-1}{2}, \right. \\
&\left.\quad\quad\quad\quad\quad\quad\quad\quad\quad\quad\quad\quad\mu_1(X)+(\mu_2(X)-\mu_1(X))\frac{Q(E_{\mu,\sigma},I_2)-1}{2}\right]\\
=&\left[\mu_1(X)-\sigma_1(X)\frac{Q(E_{\mu,\sigma},I_2)-1}{2},\mu_2(X)+\sigma_1(X)\frac{Q(E_{\mu,\sigma},I_2)-1}{2}\right].
\end{align*}
From the monotonicity of the interval lengths with respect to the empirical quantiles, we have that the two score functions are equivalent by Definition \ref{def.equiv}.
\item  Let a generalized Local score function be $E_{\mu,\sigma}(x,y)=|y-\mu(x)|/\sigma(x)$. Then it suffices to observe that
\begin{align*}
    |y-\mu(x)|=\max\{y-\mu(z),\mu(x)-y\}.
\end{align*}

\end{enumerate}

\subsection{Additional information on real datasets}\label{sec.app: data}
In Table \ref{tab:data}, we provide the predicted label, dimensions, and source for each dataset. Data cleaning and preprocessing are in accordance with the methods described by Romano et al. \cite{romano2019conformalized}.
\begin{table}[H]
    \centering
    
    \caption{Datasets for our empirical analyses, with the predicted label, number of samples ($n$), and features ($d$).}
    \begin{tabular}{c c c c c}
    \toprule
    Name & Label  &$n$ &$d$ &Source\\
     
    \hline
    
    bike & bike rental counts & 10886 & 18&\cite{misc_bike_sharing_dataset_275} \\
    bio &deviation of predicted from native protein structure & 45730 & 9&\cite{misc_physicochemical_properties_of_protein_tertiary_structure_265}\\
    blog & number of comments in the next 24 hours&52397&280&\cite{misc_blogfeedback_304}\\
    community & crime rate per community&1994&100&\cite{misc_communities_and_crime_183}\\
    concrete & concrete compressive strength&1030&8&\cite{misc_concrete_compressive_strength_165}\\
    facebook-1 &Facebook comment volume & 40948& 53&\cite{misc_concrete_compressive_strength_165}\\
    facebook-2 &Facebook comment volume &81311&53&\cite{misc_facebook_comment_volume_dataset_363}\\
    meps-19 &utilization of medical services&15785&139&\cite{misc_meps19}\\
    meps-20 &utilization of medical services&17541&139&\cite{misc_meps20}\\
    meps-21 &utilization of medical services&15656&139&\cite{misc_meps21}\\
    star &total student test scores up to the third grade&2161&39&\cite{misc_star}\\
    \bottomrule
    \end{tabular}
    \label{tab:data}
\end{table}
All datasets, except for the meps and star data sets, are licensed under CC-BY 4.0. The Medical Expenditure Panel Survey (meps) data is subject to copyright and usage rules. The licensing status of the star dataset could not be determined.

\subsection{Experimental Setup}\label{sec:model.hyperparams}
 In each dataset, we randomly hold out $20\%$ as test data.
The remaining data is divided into a training set and a calibration set, each taking up a proportion of $\gamma$ and $1-\gamma$. We explore training ratios $\gamma$ ranging from $10\%$ to $90\%$. Results corresponding to the optimal value of the hyperparameter $\gamma$ are recorded in Table \ref{tab:condcov}, following the practice of Sesia er al. \cite{Sesia_2020}.

In the training stage, we employ the random forest regressor from Python's scikit-learn package to learn the baseline Local score function. The hyperparameters are the package defaults, except for the total number of trees, which we set to 1000, and the minimum number of samples required at a leaf node, which we set to 40, as recommended by Romano et al.~\cite{romano2019conformalized}. For the baseline CQR score function, we adopt a black-box neural network quantile regressor with three fully connected layers and ReLU non-linearities, following the practice of Sesia et al.~\cite{Sesia_2020}. In the boosting stage, we set the hyper-parameters $\tau_1$, $\tau_2$ in the approximated loss (\ref{eq:proxy}) to 50. The approximated loss is then passed to the Gradient Boosting Machine from Python's XGBoost package along with a base conformity score. We set the maximum tree depth to 1 to avoid overfitting and perform cross-validation for the number of boosting rounds, as outlined in Section \ref{subsec:boosted.cp}.  All other hyperparameters are set to package defaults. 

{All experiments were conducted on a dual-socket AMD EPYC 7502 32-Core Processor system, utilizing 8 of its 128 CPUs each time.} The runtime for each dataset and random seed varies by dataset size, ranging from 10 minutes to 5 hours. 

\subsection{Additional results and error bars}\label{sec.app:error.bars}

In Tables \ref{tab:condcov.local.full} to \ref{tab:len.cqr.full}, we present additional results on marginal coverage, maximum conditional coverage deviation {($\ell_M$)}, and average interval length {($\ell_P$)} for each real dataset (including those not reported in Tables \ref{tab:condcov} and \ref{tab:table1}), both before and after boosting. Notably, in each case, boosting is applied to optimize either conditional coverage or average interval length. As a result, the non-targeted characteristic may or may not improve after boosting.

 \begin{table}[H]
  \begin{center}
  
    \caption{Additional information on conformalized intervals obtained before and after boosting for conditional coverage with the Local conformity score as baseline. The target miscoverage rate is set to $\alpha=10\%$.}
    \vspace{0.1cm}
    \label{tab:condcov.local.full}
    {\small
    \begin{tabular}{S S S S S S S}
   
      \hline
  \multicolumn{1}{c}{\multirow{2}{*}{Dataset}}& \multicolumn{2}{c}{$\ell_L$} & \multicolumn{2}{c}{$\ell_M(\%)$} & \multicolumn{2}{c}{Marginal Cov.(\%)} \\ 
  \multicolumn{1}{c}{} & \multicolumn{1}{c}{Local} & $\text{Boosted}$& \multicolumn{1}{c}{Local} & $\text{Boosted}$& \multicolumn{1}{c}{Local} & $\text{Boosted}$ \\ 
  
  \hline      
  $\text{bike}$& ${1.775}$& ${2.201}$& {10.979}&{5.638}& {89.927}& {89.646}\\
  
  bio& {1.602}&{1.614}& {5.303} & {4.862}&{90.024}&{90.093}\\
 blog & ${2.080}$&{3.403}& {51.353}& {48.591}&{89.995}&{90.040}\\
  community& {1.824}&{10.759}& {25.755} & {13.466}&{90.376}&{89.549}\\
  $\text{concrete}$& {1.058}&{1.062}& {10.740}&{8.763}&{90.583}&{91.359}\\
 $\text{facebook-1}$& {1.896}&{4.790}& {26.020}& {25.917}&{90.201}&{90.056}\\
 $\text{facebook-2}$ & {1.881}&{2.273}& {42.807} & {42.437}&{89.966}&{89.989}\\
  $\text{meps-19}$ & {2.074}&{2.926}& {15.357}& {5.656}&{90.120}&{90.497}\\
  $\text{meps-20}$ & {2.102}&{2.778}& {16.939}& {6.998}&{89.963}&{90.031}\\
  $\text{meps-21}$& {2.069}&{2.537}& {17.627} &{7.832}&{90.064}&{90.054}\\
  star& {0.189}&{0.179}& {9.658}&{9.348}&{90.808}&{90.831}\\
      
      \hline
    \end{tabular}}
  \end{center}
\end{table}

 \begin{table}[H]
  \begin{center}
  
    \caption{Additional information on conformalized intervals obtained before and after boosting for conditional coverage with the CQR conformity score as baseline. The target miscoverage rate is set to $\alpha=10\%$.}
    \vspace{0.1cm}
    \label{tab:condcov.cqr.full}
    {\small
    \begin{tabular}{S S S S S S S}
 \hline
  \multicolumn{1}{c}{\multirow{2}{*}{Dataset}}& \multicolumn{2}{c}{$\ell_L$} & \multicolumn{2}{c}{$\ell_M$(\%)} & \multicolumn{2}{c}{Marginal Cov.(\%)}\\ 
  \multicolumn{1}{c}{} & \multicolumn{1}{c}{CQR} & $\text{Boosted}$& \multicolumn{1}{c}{CQR} & $\text{Boosted}$& \multicolumn{1}{c}{CQR} & $\text{Boosted}$\\ 
  \hline
  $\text{bike}$&\text{0.555}&{0.540}&{4.934}&{4.925}&{90.073}&{90.184}\\
  bio&\text{1.518}&{1.515}&{5.069}&\text{4.700}&{89.841}&{89.853}\\
 blog &\text{1.761}&{1.766}&{27.760}&{26.836}&{90.222}&{90.244}\\
  community&{1.718}&{1.740}&{12.688}&{12.105}&{90.340}&{90.194}\\
  $\text{concrete}$&\text{0.484}&{0.489}&{9.039}&{8.265}&{90.451}&{90.652}\\
 $\text{facebook-1}$&{1.374}&{1.371}&{13.407}&{13.255}&{90.465}&{90.247}\\
 $\text{facebook-2}$ &{1.465}&{1.409}&{18.257}&{18.002}&{89.763}&{90.001}\\
  $\text{meps-19}$&{2.784}&{2.784}&{5.507}&{5.507}&{90.257}&{90.257}\\
  $\text{meps-20}$ &{2.769}&{2.743}&{7.614}&\text{7.184}&{89.991}&{90.006}\\
  $\text{meps-21}$&{2.834}&{2.815}&{8.16}&\text{8.067}&{90.169}&{90.067}\\
  star&{0.199}&{0.209}&{9.728}&{9.630}&{91.085}&{91.339}\\
    

      \hline
    \end{tabular}}
  \end{center}
\end{table}

\begin{table}[H]
  \begin{center}
  
    \caption{Additional information on conformalized intervals obtained before and after boosting for length with the Local conformity score as baseline. The target miscoverage rate is set to $\alpha=10\%$.}
    \vspace{0.1cm}
    \label{tab:len.local.full}
    {\small
    \begin{tabular}{c S S S S S S}
      \hline
      \multicolumn{1}{c}{\multirow{2}{*}{Dataset}}& \multicolumn{2}{c}{$\ell_L$} & \multicolumn{2}{c}{$\ell_M$(\%)} & \multicolumn{2}{c}{Marginal Cov.(\%)} \\ 

      \multicolumn{1}{c}{} & \multicolumn{1}{c}{Local} & $\text{Boosted}$& \multicolumn{1}{c}{Local} & $\text{Boosted}$& \multicolumn{1}{c}{Local} & $\text{Boosted}$ \\ 
      
      \hline      
      $\text{bike}$& {1.775}&{1.360}& {22.590}& {19.616}& {89.927}& {89.862}\\
bio&{1.562} & {1.514}& {5.995}&{5.791}&{89.937}&{89.962}\\
blog &  {2.056}& {0.972} & {52.440}&{54.858}&{89.988}&{89.978}\\
community&{1.728} & {1.678}& {26.066}&{24.822}&{89.499}&{89.323}\\
concrete & {1.010}&{0.698} & {11.029}&{10.518}&{90.631}&{90.728}\\
$\text{facebook-1}$& {1.896}& {1.384}& {26.020}&{34.259}&{90.201}&{89.944}\\
$\text{facebook-2}$ & {1.854} & {1.363} & {42.624}&{50.697}&{90.020}&{89.972}\\
$\text{meps-19}$ & {2.070}& {1.685}& {18.626}&{14.623}&{90.054}&{90.070}\\
$\text{meps-20}$ & {2.081}& {1.836}& {17.897}&{14.643}&{89.869}&{89.849}\\
$\text{meps-21}$&{2.063} &{1.795} & {18.795}&{13.324}&{89.914}&{89.920}\\
star& {0.179}&{0.179} & {9.976}&{9.407}&{90.901}&{90.577}\\
      \hline
    \end{tabular}}
  \end{center}
\end{table}

 \begin{table}[H]
  \begin{center}
  
    \caption{Additional information on conformalized intervals obtained before and after boosting for length with the CQR conformity score as baseline. The target miscoverage rate is set to $\alpha=10\%$.}
    \vspace{0.1cm}
    \label{tab:len.cqr.full}
    {\small
    \begin{tabular}{c S S S S S S}
      \hline
      \multicolumn{1}{c}{\multirow{2}{*}{Dataset}}& \multicolumn{2}{c}{$\ell_L$} & \multicolumn{2}{c}{$\ell_M$(\%)} & \multicolumn{2}{c}{Marginal Cov.(\%)}\\ 

      \multicolumn{1}{c}{} & \multicolumn{1}{c}{CQR} & $\text{Boosted}$& \multicolumn{1}{c}{CQR} & $\text{Boosted}$& \multicolumn{1}{c}{CQR} & $\text{Boosted}$\\ 
      \hline
     \text{bike} & {0.553} & {0.489} & \text{9.530} & {10.092} & {90.041} & {90.418} \\
\text{bio} & {1.516} & {1.468} & \text{5.408} & {5.265} & {89.670} & {89.880} \\
\text{blog} & {1.445} & {1.434} & \text{29.875} & {34.011} & {90.149} & {90.260} \\
\text{community} & {1.693} & {1.699} & {14.006} & {13.536} & {89.499} & {89.699} \\
\text{concrete} & {0.391} & {0.393} & \text{10.342} & {10.700} & {88.932} & {89.223} \\
\text{facebook-1} & {1.198} & {1.073} & {20.132} & {30.901} & {89.937} & {90.013} \\
\text{facebook-2} & {1.200} & {1.075} & {29.233} & {34.475} & {90.035} & {90.010} \\
\text{meps-19} & {2.554} & {2.136} & {10.357} & {11.285} & {90.228} & {90.399} \\
\text{meps-20} & {2.667} & {2.357} & {10.826} & {10.720} & {89.875} & {89.838} \\
\text{meps-21} & {2.585} & {2.105} & {10.968} & {10.565} & {89.946} & {89.863} \\
\text{star} & {0.195} & {0.194} & {9.982} & {10.142} & {91.455} & {91.432} \\

      \hline
    \end{tabular}}
  \end{center}
\end{table}

We have previously reported the evaluated losses $\ell_M$ and $\ell_L$ for each dataset, averaged over ten random seeds. Tables \ref{tab:error.bar.cond} and \ref{tab:error.bar.power} below detail the distribution of these evaluations, providing the mean, $10\%$ quantile, and $90\%$ quantile for the test set deviations in conditional coverage ($\ell_M$) and {average interval length} ($\ell_L$). These statistics are derived from 110 test set evaluations across 11 datasets and 10 random training-test splits. We opt to report empirical quantiles instead of standard deviations due to the asymmetric and non-Gaussian nature of the data.
\begin{table}[H]
  \begin{center}
    \caption{Distribution of the test set conditional coverage deviation $\ell_M$ evaluated on various conformalized prediction intervals across 11 datasets and 10 random training-test splits.
    }
    \label{tab:error.bar.cond}
    \vspace{0.1cm}

    \begin{tabular}{S S S S S}
    \hline
    \multicolumn{5}{c}{Max. Conditional Coverage Deviation ($\%$), target miscoverage $\alpha = 10\%$}\\
      \hline
      \multicolumn{1}{c}{\multirow{2}{*}{Statistics}}& \multicolumn{2}{c}{Method} & \multicolumn{2}{c}{Method} \\

      \multicolumn{1}{c}{} & \multicolumn{1}{c}{Local}& \multicolumn{1}{c}{Boosted} & $\text{CQR}$& \multicolumn{1}{c}{Boosted}\\ 
      \hline
     $\text{mean}$& 21.139920528346673$\%$  &16.319050912454394$\%$ &11.10606016582242$\%$  &10.770604958169429$\%$ \\
      $\text{10}\%\text{ quantile}$  & 7.267063921993501$\%$&4.604106587687164$\%$ &4.890353796944252$\%$ &4.909938138939401$\%$ \\
    $\text{90}\%\text{ quantile}$& 47.832351089570047$\%$ &44.712387042223756$\%$&22.585248360171437$\%$  &18.69673306772906$\%$ \\
      \hline
    \end{tabular}
  \end{center}
\end{table}

\begin{table}[H]
  \begin{center}
    \caption{Distribution of the test set {average interval length} $\ell_L$ evaluated on various conformalized prediction intervals across 11 datasets and 10 random training-test splits.
    }
    \label{tab:error.bar.power}
    \vspace{0.1cm}

    \begin{tabular}{S S S S S}
    \hline
    \multicolumn{5}{c}{Average Length, target miscoverage $\alpha = 10\%$}\\
      \hline
      \multicolumn{1}{c}{\multirow{2}{*}{Statistics}}& \multicolumn{2}{c}{Method} & \multicolumn{2}{c}{Method} \\

      \multicolumn{1}{c}{} & \multicolumn{1}{c}{Local}& \multicolumn{1}{c}{Boosted} & $\text{CQR}$& \multicolumn{1}{c}{Boosted}\\ 
      \hline
     $\text{mean}$& 1.6774321511111188  &1.3167571673500214  &  1.4826891378899083 & 1.3194553659090906 \\
      $\text{10}\%\text{ quantile}$  & 0.9502768404499357 & 0.5126313678434842 &  0.345673884 & 0.350588561 \\
    $\text{90}\%\text{ quantile}$& 2.0824626928427494  & 1.8288196087095183 &  2.6548226400000003 & 2.21005927 \\
      \hline
    \end{tabular}
  \end{center}
\end{table}

\subsection{Experiments under different miscoverage rates}\label{subsec:diff.miscov}
In Tables \ref{tab:len.local.full.5} and \ref{tab:len.local.full.20}, we illustrate the performance of boosting for length with the Local conformity score as baseline with miscoverage rates set to $5\%$ and $20\%$, respectively.
\begin{table}[H]
  \begin{center}
  
    \caption{Additional information on conformalized intervals obtained before and after boosting for length with the Local conformity score as baseline. The target miscoverage rate is set to $\alpha=5\%$.}
    \vspace{0.1cm}
    \label{tab:len.local.full.5}
    {\small
    \begin{tabular}{S S S S S S S S S}
      \hline
      \multicolumn{1}{c}{\multirow{2}{*}{Dataset}}& \multicolumn{3}{c}{$\ell_L$} & \multicolumn{2}{c}{$\ell_M$(\%)} & \multicolumn{2}{c}{Marginal Cov.(\%)} \\ 

      \multicolumn{1}{c}{} & \multicolumn{1}{c}{Local} & $\text{Boosted}$&\text{Improvement}& \multicolumn{1}{c}{Local} & $\text{Boosted}$& \multicolumn{1}{c}{Local} & $\text{Boosted}$ \\ 
      
      \hline      
$\text{bike}$& {2.523}&{1.828}& {-27.54\%}& {11.553}& {11.626}& {94.927}& {94.972}\\
bio&{1.881} & {1.801}& {-4.25\%}& {5.013}&{4.694}&{94.857}&{94.802}\\
blog &  {4.217}& {1.923} & {-54.39\%}& {37.976}&{34.505}&{95.027}&{95.048}\\
community&{2.578} & {2.217}& {-13.99\%}& {16.406}&{14.554}&{95.414}&{94.812}\\
$\text{concrete}$& {1.129}&{0.838} & {-25.80\%}& {8.55}&{8.086}&{94.709}&{94.515}\\
$\text{facebook-1}$& {2.667}& {2.262}& {-15.18\%}& {20.85}&{20.052}&{95.104}&{95.049}\\
$\text{facebook-2}$ & {2.441} & {2.092} & {-14.30\%}& {41.645}&{39.943}&{94.941}&{95.016}\\
$\text{meps-19}$ & {3.618}& {2.825}& {-21.92\%}& {10.515}&{9.178}&{95.005}&{95.008}\\
$\text{meps-20}$ & {3.951}& {3.049}& {-22.84\%}& {11.043}&{9.693}&{94.907}&{94.922}\\
$\text{meps-21}$&{4.030} &{3.033} & {-24.74\%}& {8.918}&{8.910}&{95.061}&{95.026}\\
star& {0.207}&{0.207} & {-0.11\%}& {6.57}&{6.906}&{95.358}&{95.289}\\
      
      \hline
    \end{tabular}}
  \end{center}
\end{table}

\begin{table}[H]
  \begin{center}
  
    \caption{Additional information on conformalized intervals obtained before and after boosting for length with the Local conformity score as baseline. The target miscoverage rate is set to $\alpha=20\%$.}
    \vspace{0.1cm}
    \label{tab:len.local.full.20}
    {\small
    \begin{tabular}{S S S c S S S S S}
      \hline
      \multicolumn{1}{c}{\multirow{2}{*}{Dataset}}& \multicolumn{3}{c}{$\ell_L$} & \multicolumn{2}{c}{$\ell_M$(\%)} & \multicolumn{2}{c}{Marginal Cov.(\%)} \\ 

      \multicolumn{1}{c}{} & \multicolumn{1}{c}{Local} & $\text{Boosted}$&\text{Improvement}& \multicolumn{1}{c}{Local} & $\text{Boosted}$& \multicolumn{1}{c}{Local} & $\text{Boosted}$ \\ 
      
      \hline   
      \text{bike} & 1.312 & 0.983 & -25.07\% & 24.649 & 28.561 & 79.403 & 80.152 \\
\text{bio} & 1.248 & 1.213 & -2.80\% & 10.838 & 10.540 & 79.729 & 79.594 \\
\text{blog} & 1.904 & 0.513 & -73.07\% & 45.813 & 64.360 & 79.756 & 79.920 \\
\text{community} & 1.337 & 1.234 & -7.71\% & 25.047 & 28.831 & 79.674 & 79.674 \\
\text{concrete} & 0.833 & 0.549 & -34.10\% & 13.831 & 20.177 & 80.728 & 80.825 \\
\text{facebook-1} & 1.624 & 0.747 & -54.02\% & 29.384 & 47.994 & 80.190 & 79.731 \\
\text{facebook-2} & 1.580 & 0.749 & -52.60\% & 36.488 & 62.070 & 79.890 & 79.930 \\
\text{meps-19} & 1.821 & 1.020 & -44.03\% & 20.561 & 24.763 & 80.326 & 80.013 \\
\text{meps-20} & 1.843 & 1.095 & -40.60\% & 19.137 & 24.194 & 79.684 & 79.823 \\
\text{meps-21} & 1.831 & 1.064 & -41.86\% & 18.227 & 23.172 & 80.674 & 80.057 \\
\text{star} & 0.142 & 0.141 & -0.71\% & 15.420 & 14.517 & 80.647 & 80.370 \\
      \hline
    \end{tabular}}
  \end{center}
\end{table}

\subsection{Training a gradient boosting algorithm with our custom loss functions}\label{subsec:compare.optimal.learning}
Our proposed boosted conformal procedure serves as a post-training step designed to refine the conformity score $E(\cdot,\cdot;f)$ obtained during model training. This procedure can leverage pre-trained models when available. In the absence of pre-trained models, we can alternatively train a gradient boosting algorithm directly using our custom loss functions. For example, in the context of the local score, we may initialize with $\mu^{(0)}=0$, $\sigma^{(0)}=1$ and then apply our boosted conformal procedure.  As discussed in Section \ref{sec:discussion}, this approach is flexible enough to replace gradient boosting with any gradient-based algorithm, such as neural networks, trained under our custom loss functions. This framework aligns with that of \cite{stutz2022learning}, which introduces a neural network trained with a custom loss function to minimize the average prediction set size in classification tasks.

In Figure \ref{fig:compare.optimal.class}, we compare the performance of the two approaches optimizing for average interval length, searching within the generalized Local score family $\mathcal{F}$ defined in (\ref{eq:local.gen}). The primary distinction between the two procedures lies in the initialization: the first approach employs $\mu^{(0)}=0$, $\sigma^{(0)}=1$, while the second derives $\mu^{(0)}$ and $\sigma^{(0)}$ from a trained random forests model. We run the experiments on the meps-19 dataset and compare the performance across different splits of the training and calibration data. In this context, the percentage of training data refers to the proportion of training data within the combined training and calibration datasets. Our results indicate that cross-validation selects a greater number of boosting iterations when we directly train the gradient boosting algorithm, resulting in longer runtime. However, the average interval length and maximum conditional coverage deviation after boosting are notably smaller for the boosted conformal procedure we introduced in this paper.

\begin{figure}
    \centering
    \includegraphics[width=\linewidth]{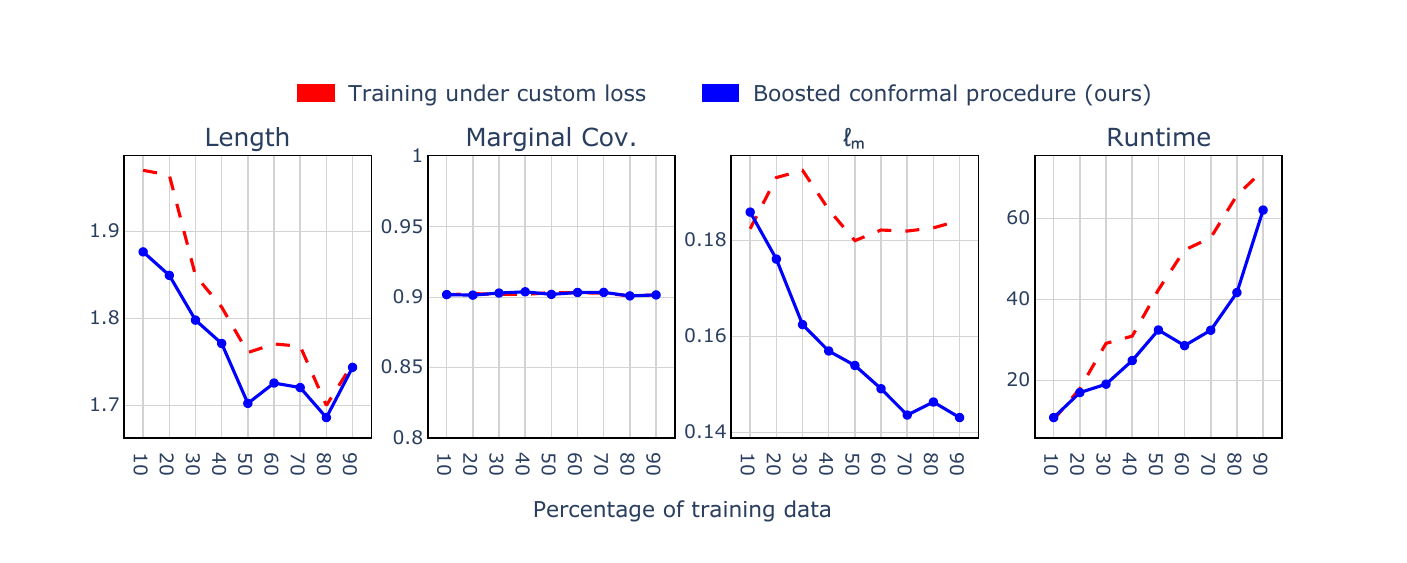}
    \vspace{-0.6cm}
    \caption{Comparison of boosted interval length, marginal coverage, maximum conditional coverage deviation ($\ell_M$), and runtime between direct training of a gradient-based algorithm (red) and boosting on a pre-trained conformity score (blue).}
    \label{fig:compare.optimal.class}
\end{figure}

\subsection{Selecting optimal boosting rounds via hold-out validation set}\label{subsec:validation}
In our boosted conformal procedure, we use cross-validation on the training set to determine the optimal number of boosting rounds, a process that can be time-consuming. An alternative approach is to hold out a fraction of the training set for validation. While more computationally efficient, this method introduces a trade-off: a smaller validation set can lead to greater variability in prediction intervals and model performance, whereas a larger validation set may reduce the effective training set size, potentially limiting the model’s performance. To explore this trade-off, we conduct experiments on the bike dataset, optimizing for prediction interval length. We compare performance across two settings: 5-fold cross-validation and a hold-out validation set, with the training-to-validation set ratio ranging from 1:1 to 8:1. For each setting, we run 100 experiments, recording the average boosted length, the standard deviation of boosted lengths, and the average runtime. The results are shown in Figure \ref{fig:validation}.
\begin{figure}
    \centering

    \includegraphics[width=\linewidth,trim={0cm 1cm 0cm 1cm},clip]{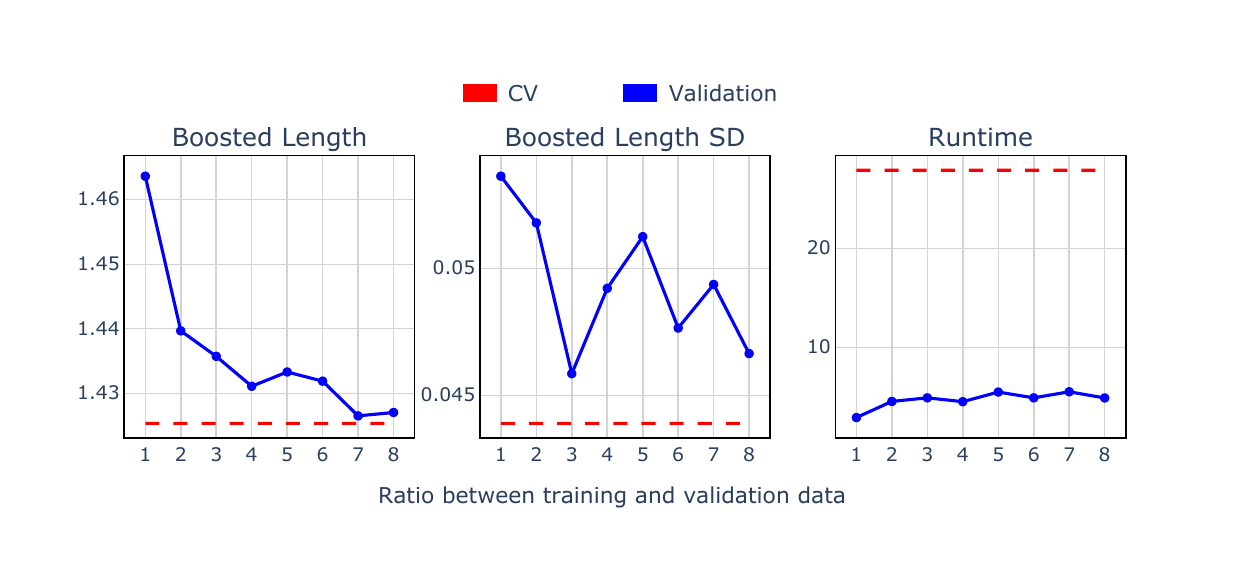}
    \caption{Comparison of the average boosted length, standard deviation of boosted length, and average runtime of the boosting procedure when selecting the optimal number of boosting rounds using 5-fold cross-validation versus a hold-out validation set of varying sizes. Experiments are conducted on the bike dataset, with a target miscoverage rate of $\alpha = 10\%$.}
    \label{fig:validation}
\end{figure}
\subsection{Additional figures on individual datasets}
In this section, we present a series of supplementary figures. First, we showcase the improvements in conditional coverage achieved through the boosted procedure for each benchmark dataset. Figure \ref{fig:supp.cond.meps} details results for datasets meps-20 and meps-21. Figure \ref{fig:supp.cond.com} details results for datasets community, bike, and concrete. 

Next, we illustrate enhanced interval lengths. Figure \ref{fig:supp.len.meps} details results for datasets meps-20, meps-21, and bike. Figure \ref{fig:supp.len.facebook} details results for datasets facebook-1, facebook-2, and concrete. Finally, we demonstrate in Figure \ref{fig:cv} how cross-validating the number of boosting rounds effectively prevents the gradient boosting algorithm from overfitting. 

\begin{figure}[H]
\centering 
 \begin{subfigure}[b]{\textwidth}
         \centering
\includegraphics[trim={0cm 1cm 0cm 0cm},clip,width=\textwidth]{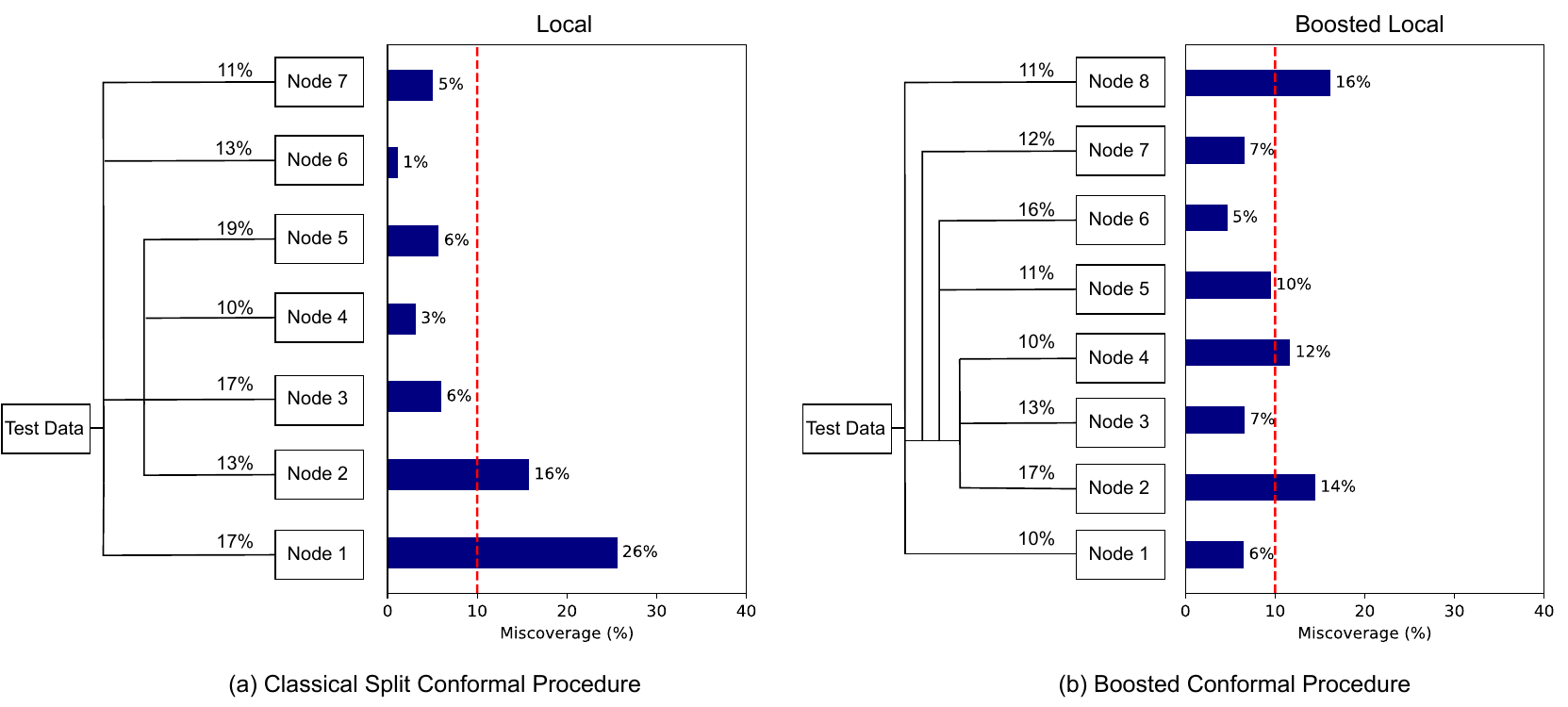}
         \caption{{meps-20} dataset.}
     \end{subfigure}
     \hfill
     \begin{subfigure}[b]{\textwidth}
         \centering
         \includegraphics[trim={0cm 1cm 0cm 0cm},clip,width=\textwidth]{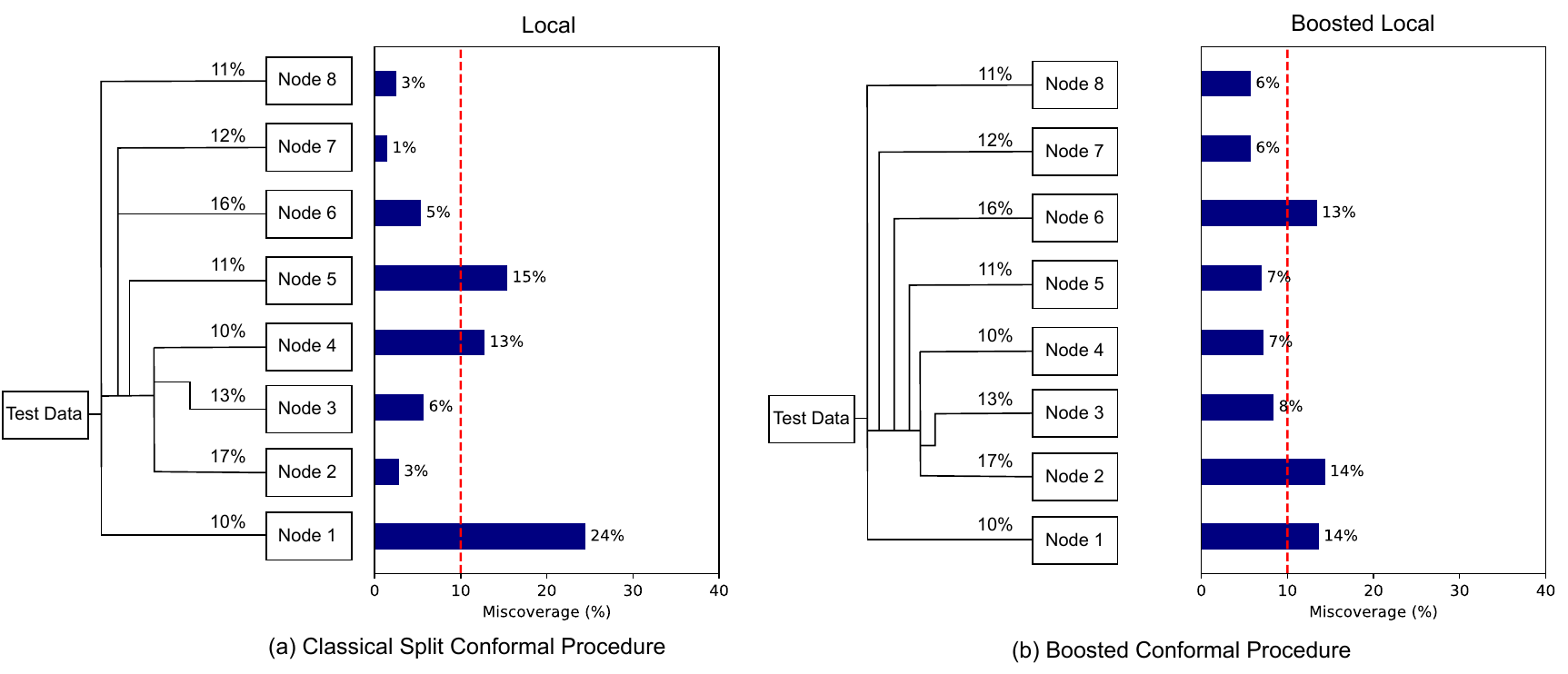}
         \caption{{meps-21} dataset.}
     \end{subfigure}
    \caption{See the caption of Figure \ref{fig:condcov.local.meps19} for details.}
    \label{fig:supp.cond.meps}
\end{figure}

\begin{figure*}[p] 
\centering
\vspace*{\fill}
 \begin{subfigure}[b]{\textwidth}
         \centering
\includegraphics[trim={0cm 1cm 0cm 0cm},clip,width=\textwidth]{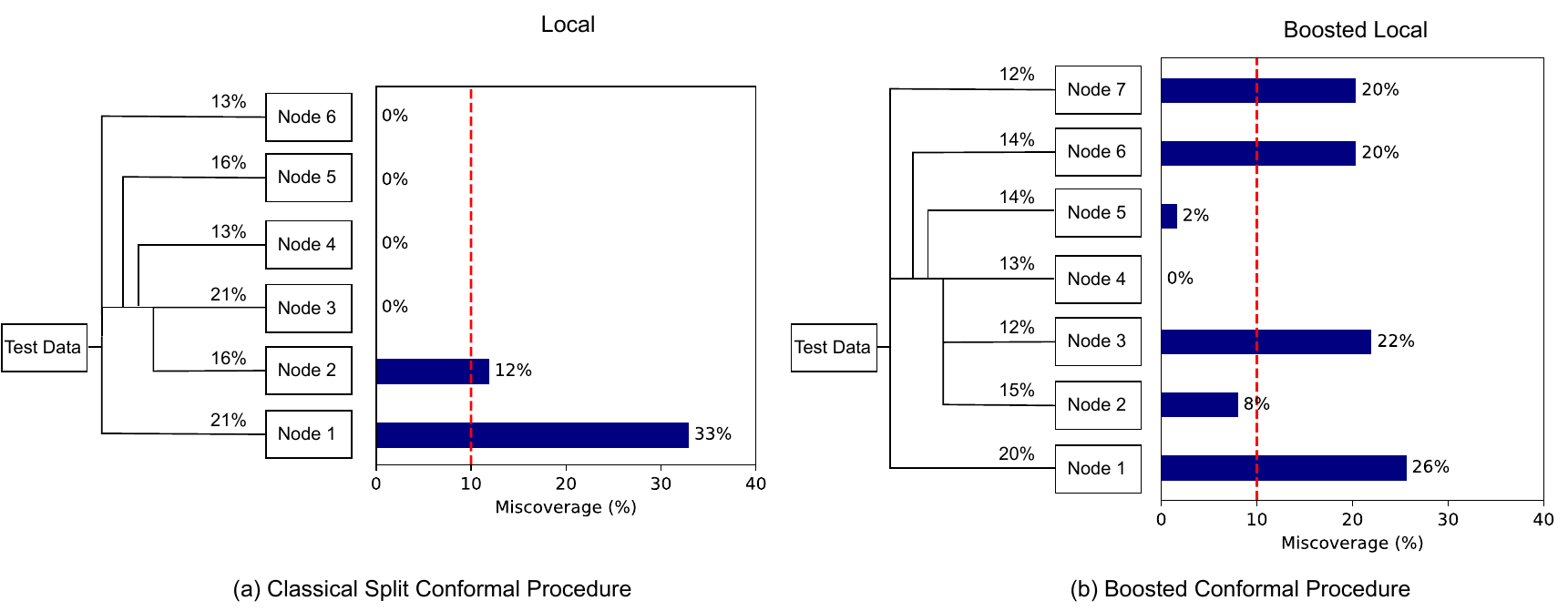}
         \caption{{community} dataset.}
     \end{subfigure}
     \hfill
     \begin{subfigure}[b]{\textwidth}
         \centering
         \includegraphics[trim={0cm 1cm 0cm 0cm},clip,width=\textwidth]{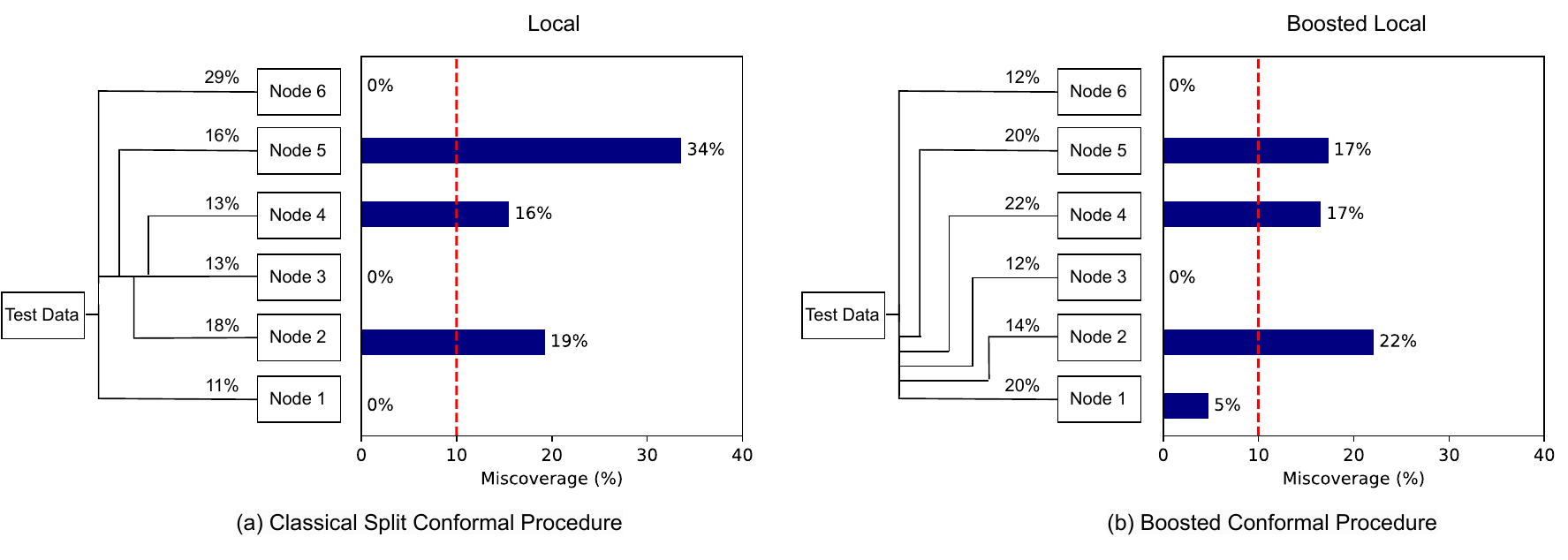}
         \caption{{bike} dataset.}
     \end{subfigure}
      \hfill
     \begin{subfigure}[b]{\textwidth}
         \centering
         \includegraphics[trim={0cm 1cm 0cm 0cm},clip,width=\textwidth]{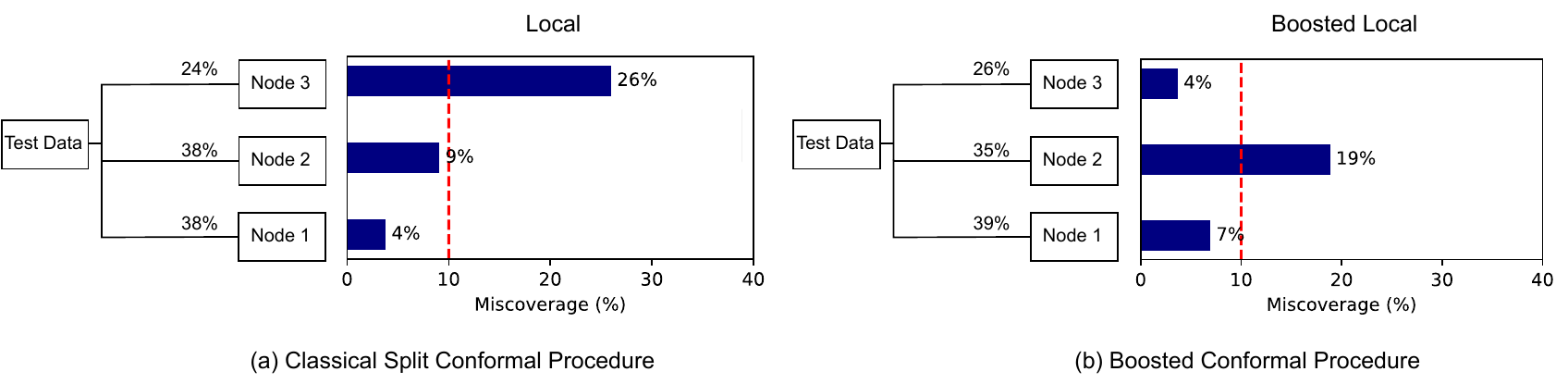}
         \caption{{concrete} dataset.}
     \end{subfigure}

    \caption{See the caption of Figure \ref{fig:condcov.local.meps19} for details.}
    \label{fig:supp.cond.com}
\vspace*{\fill}
\end{figure*}

\begin{figure*}[p] 
\centering
\vspace*{\fill}
 \begin{subfigure}[b]{\textwidth}
         \centering
\includegraphics[width=\textwidth]{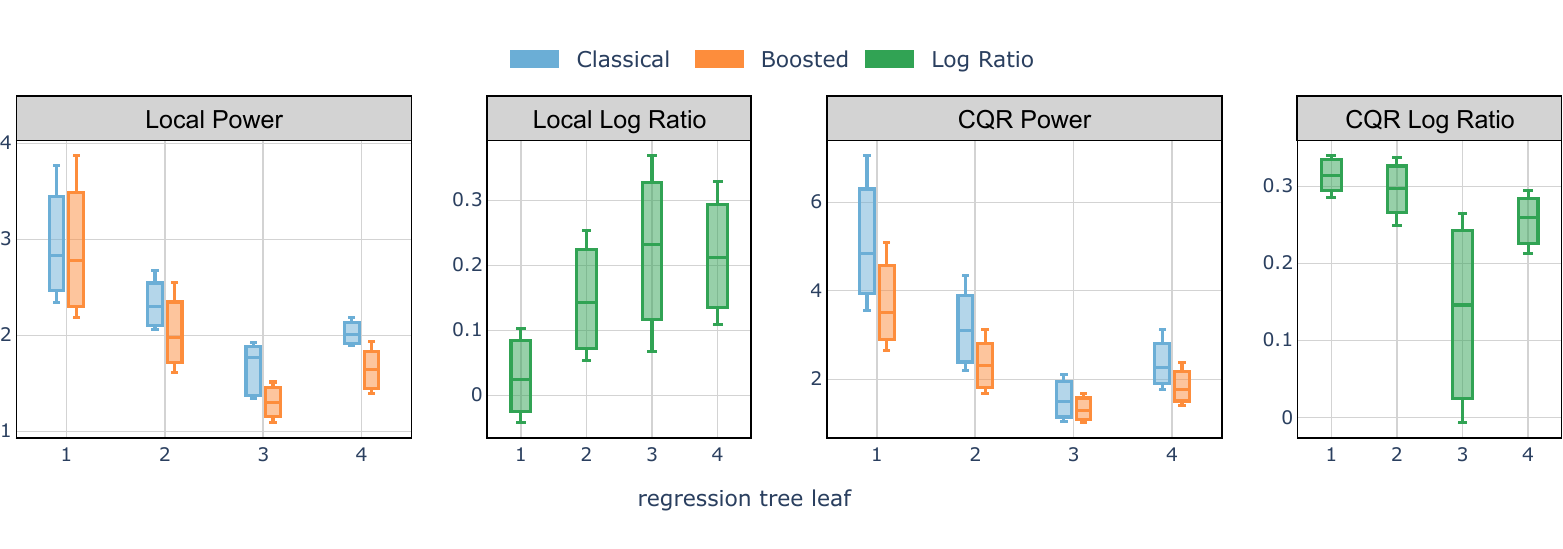}
         \caption{{meps-20} dateset. 
         }
     \end{subfigure}
     \hfill
     \begin{subfigure}[b]{\textwidth}
         \centering
         \includegraphics[width=\textwidth]{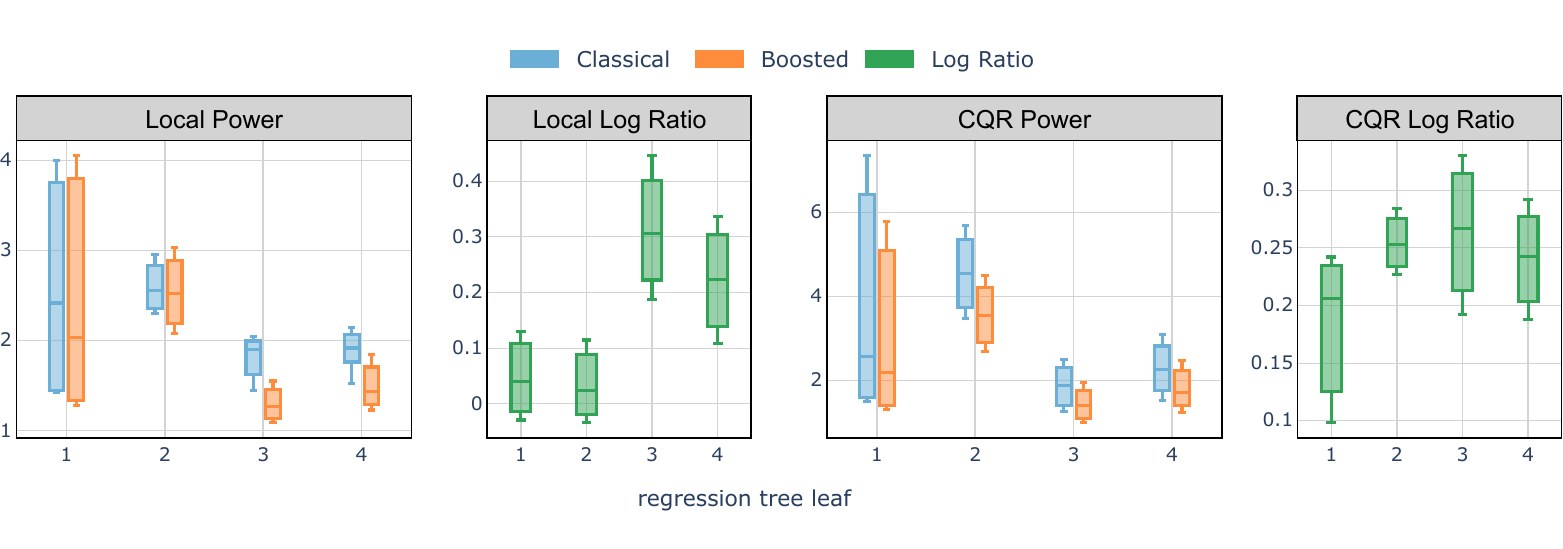}
         \caption{{meps-21} dateset. 
         }
     \end{subfigure}
      \hfill
     \begin{subfigure}[b]{\textwidth}
         \centering
         \includegraphics[width=\textwidth]{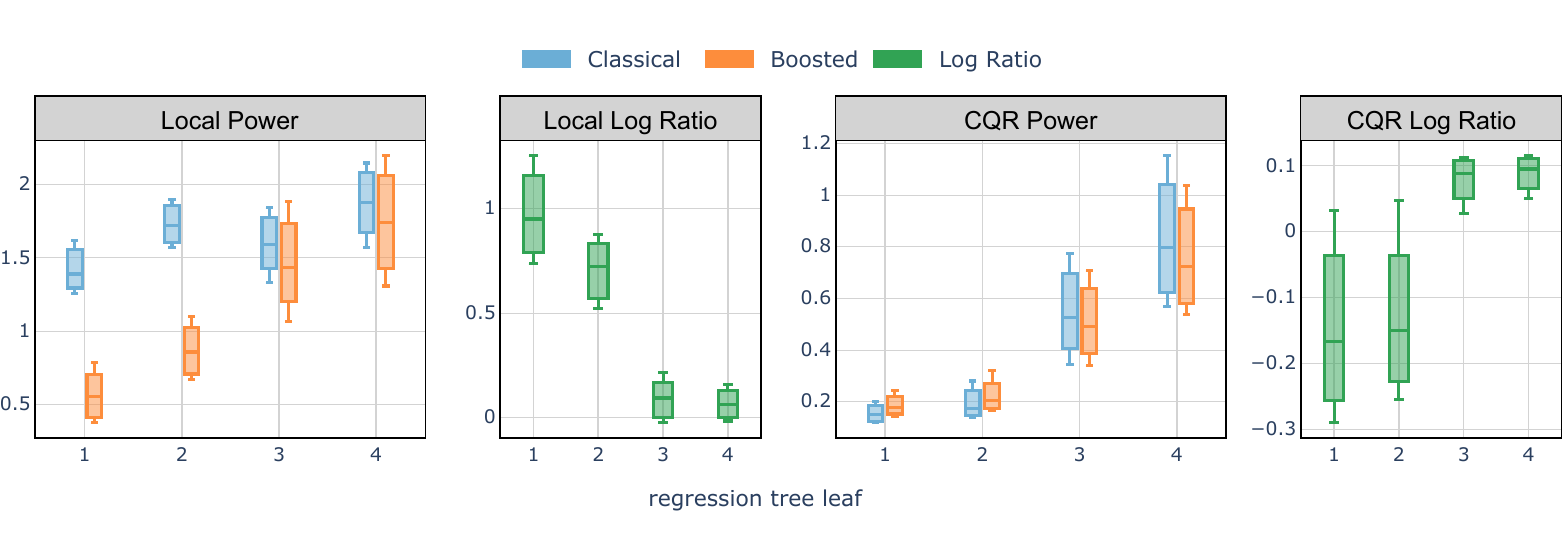}
         \caption{{bike} dateset. 
         }
     \end{subfigure}
    \caption{
    See the caption of Figure \ref{fig:len.real} for details.
    }
        \label{fig:supp.len.meps}

\vspace*{\fill}
\end{figure*}

\begin{figure*}[p] 
\centering
\vspace*{\fill}
 \begin{subfigure}[b]{\textwidth}
         \centering
\includegraphics[width=\textwidth]{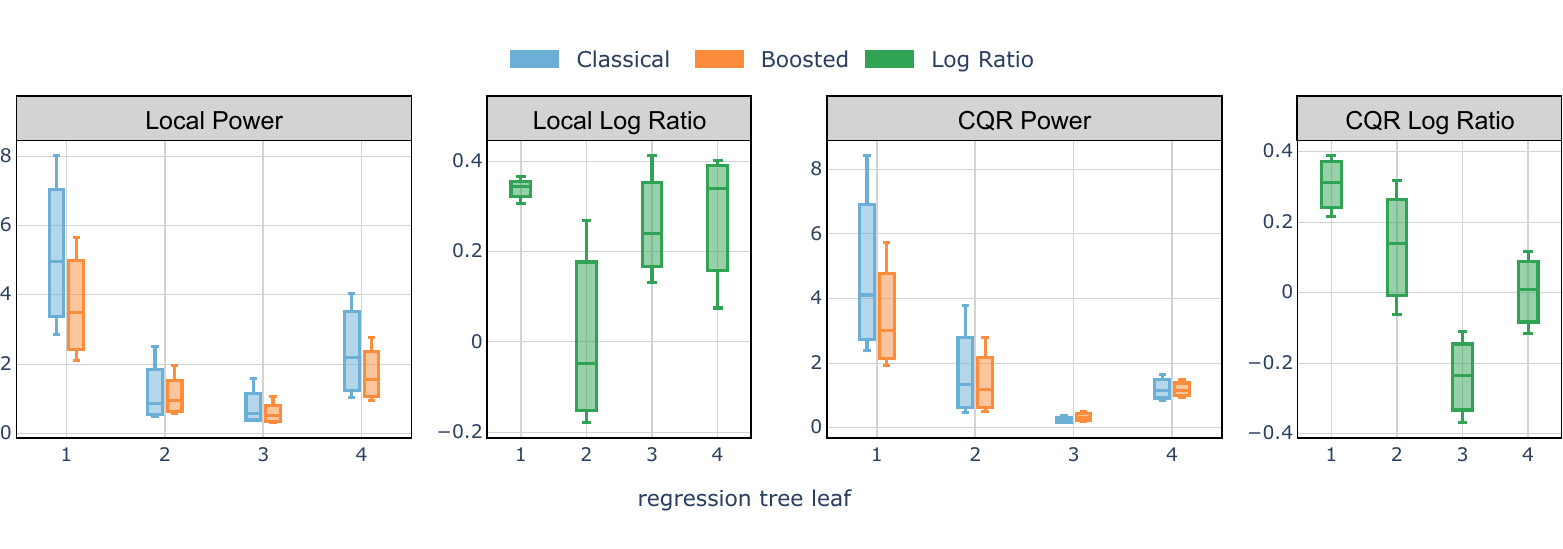}
         \caption{{facebook-1} dateset. 
         }
     \end{subfigure}
     \hfill
     \begin{subfigure}[b]{\textwidth}
         \centering
         \includegraphics[width=\textwidth]{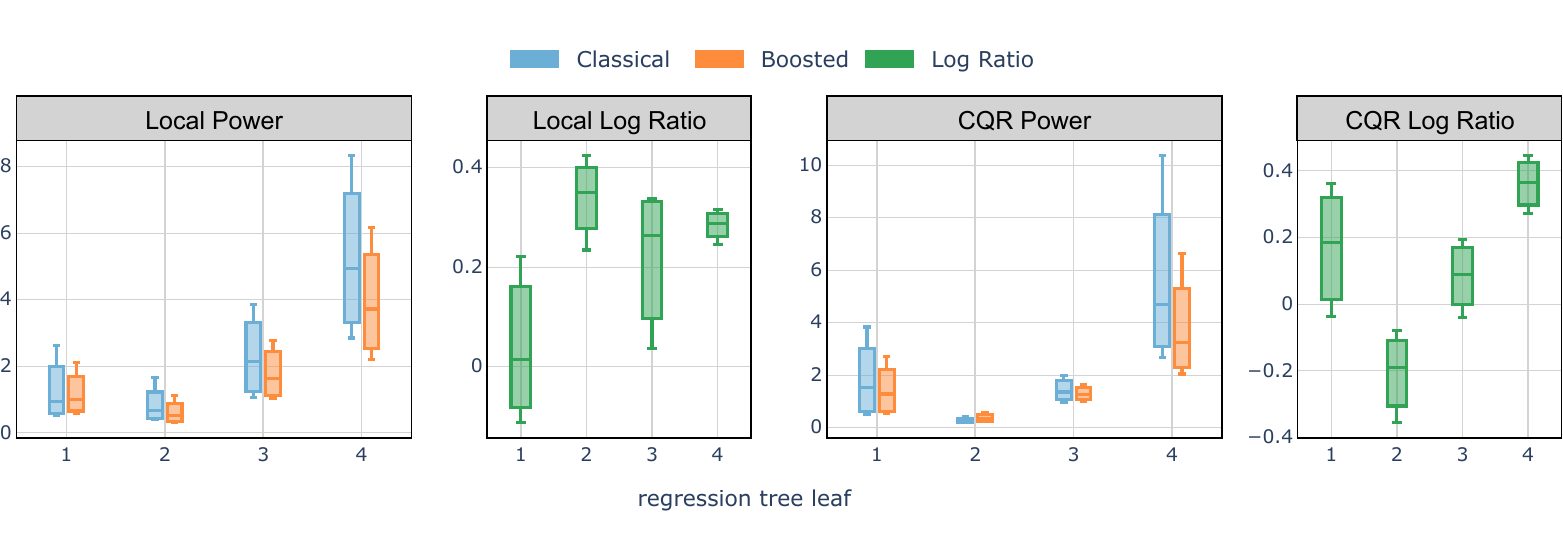}
         \caption{{facebook-2} dateset. 
         }
     \end{subfigure}
      \hfill
     \begin{subfigure}[b]{\textwidth}
         \centering
         \includegraphics[width=\textwidth]{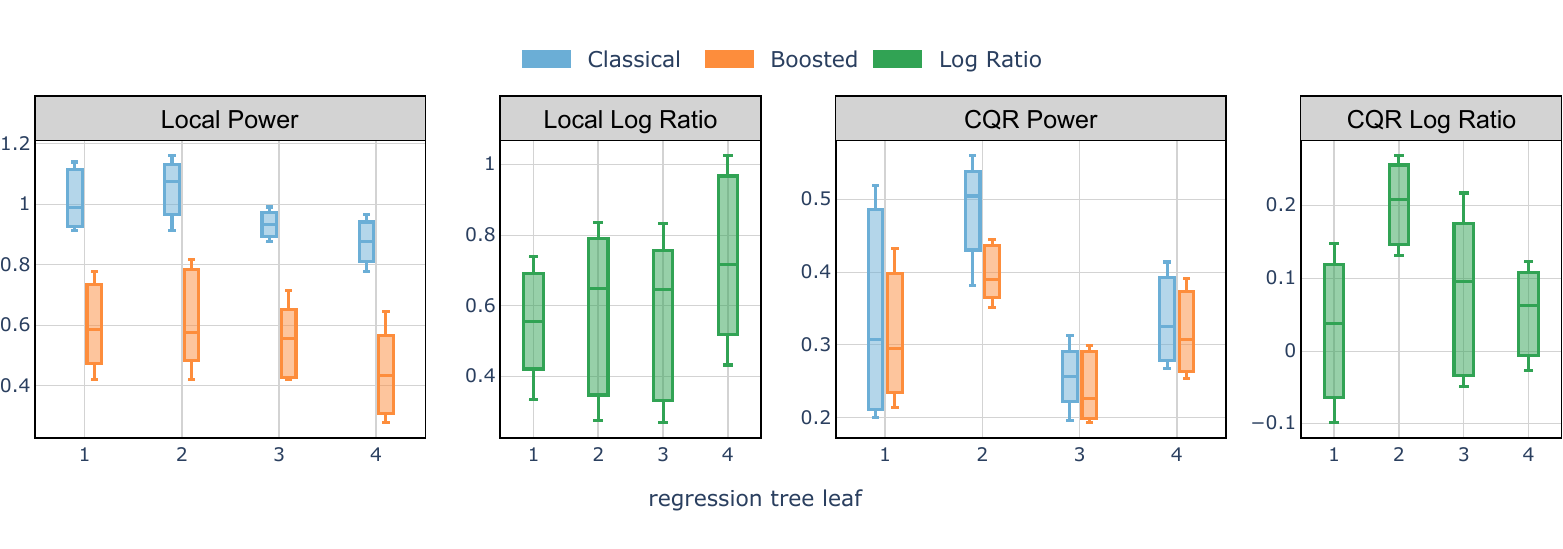}
         \caption{  {concrete} dateset. 
         }
     \end{subfigure}
    \caption{
    See the caption of Figure \ref{fig:len.real} for details.
    }
        \label{fig:supp.len.facebook}

\vspace*{\fill}
\end{figure*}

\begin{figure*}[p] 
\centering
\vspace*{\fill}
     \begin{subfigure}[b]{\textwidth}
         \centering
\includegraphics[width=\textwidth]{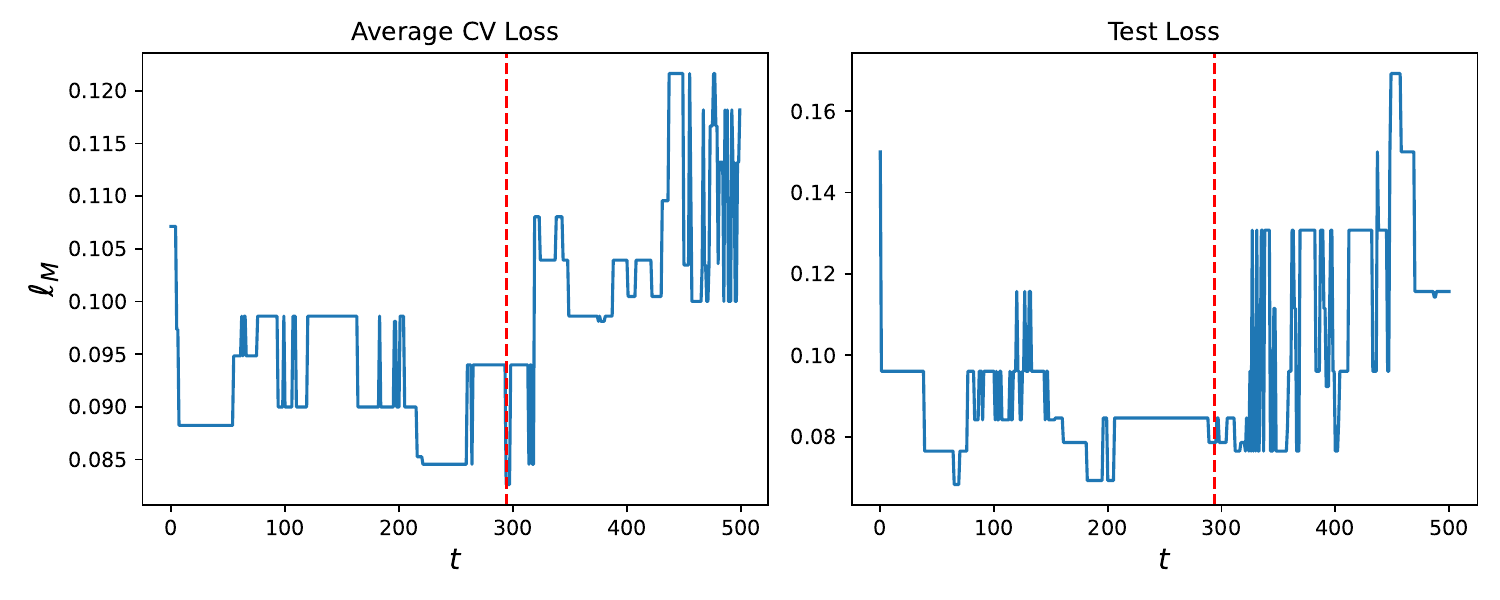}
         \caption{seed 9.}
     \end{subfigure}
     \hfill
      \begin{subfigure}[b]{\textwidth}
         \centering
\includegraphics[width=\textwidth]{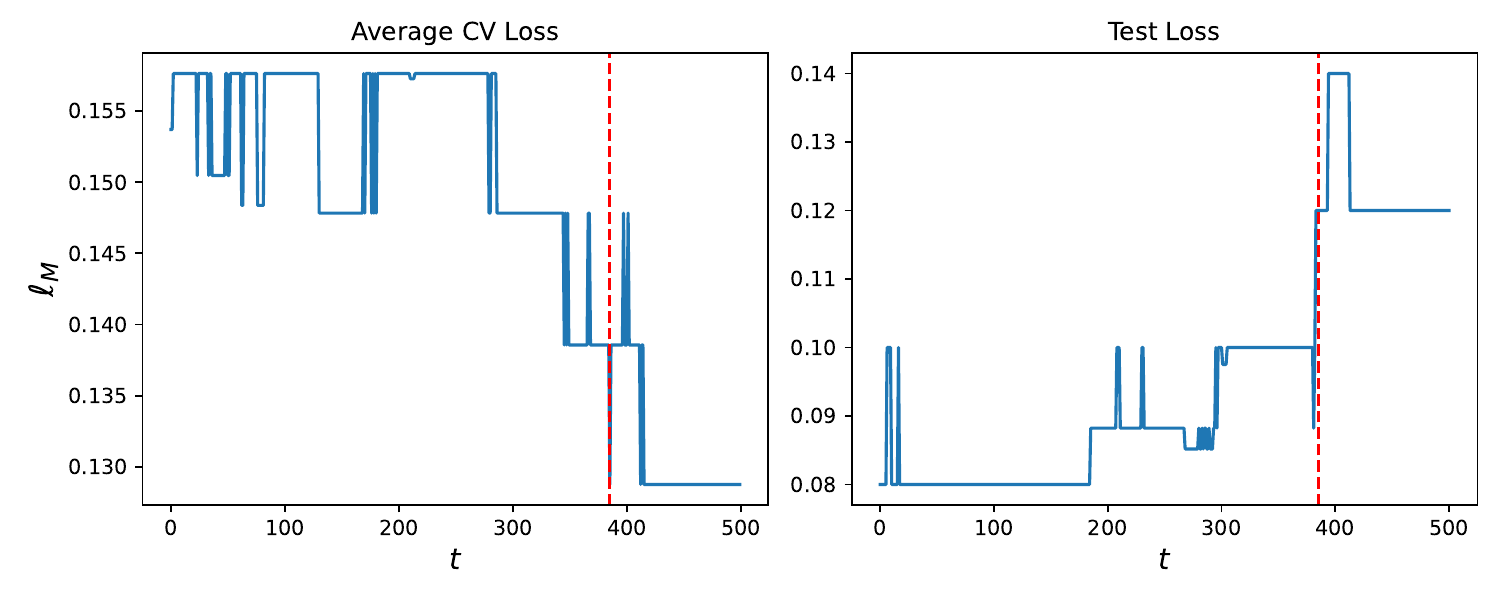}
         \caption{seed 8.}
     \end{subfigure}
     \hfill
      \begin{subfigure}[b]{\textwidth}
         \centering
\includegraphics[width=\textwidth]{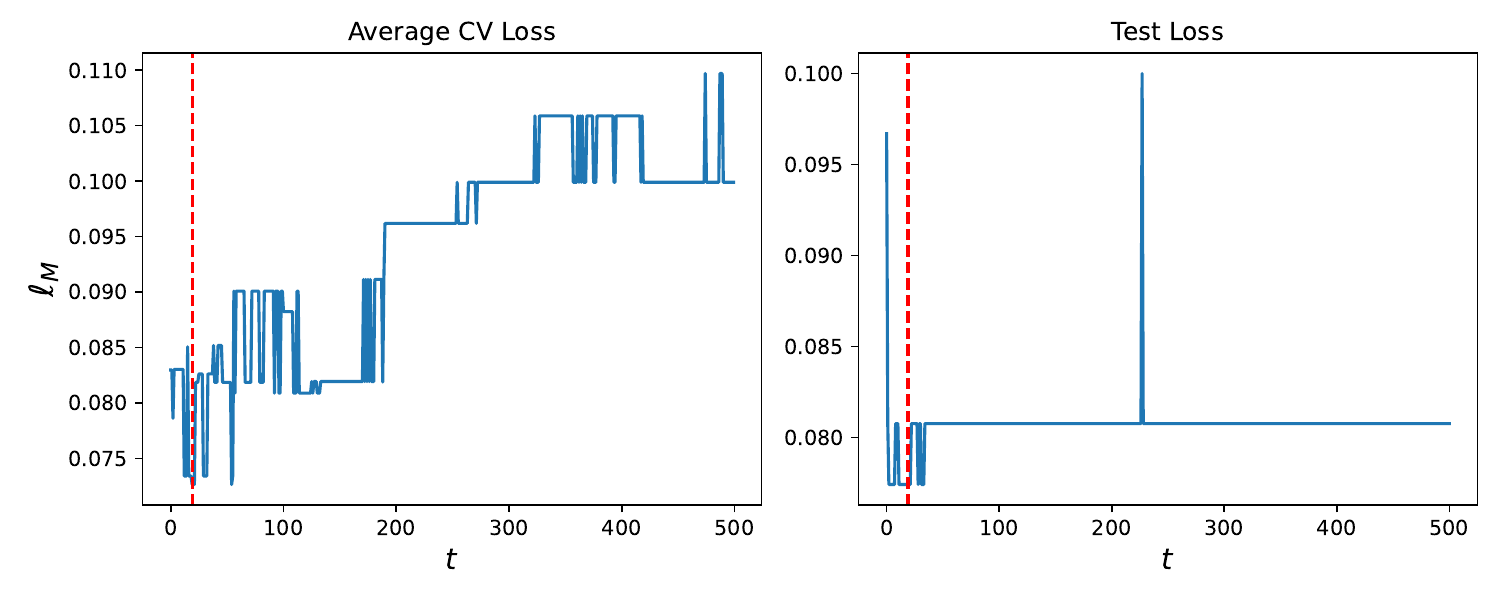}
         \caption{seed 7.}
     \end{subfigure}
    \caption{Empirical maximum deviation $\ell_M$ across $T=500$ boosting rounds evaluated on dataset {concrete} under random seeds 7, 8, 9, train-calibration ratio $60\%$ : The left panel illustrates the cross-validated loss, computed as the average across $k=3$ sub-calibration folds. The right panel displays the test loss. The optimal number of boosting rounds $\tau$, determined through cross-validation as specified in (\ref{eq:cv.step}), is highlighted in red.} 
    \label{fig:cv}
\vspace*{\fill}
\end{figure*}

\end{document}